\newcommand{\mcW}{\mathcal{W}}
\newcommand{\mcM}{\mathcal{M}}
\newcommand{\rt}{\mathsf{rt}}
\newcommand{\rtg}{\mathsf{rt}(G)}
\newcommand{\doh}{\mathsf{doh}}
\newcommand{\dohg}{\mathsf{doh}(G)}
\newcommand{\repeattheorem}[1]{%
  \begingroup
  \renewcommand{\thetheorem}{\ref{#1}}%
  \expandafter\expandafter\expandafter\theorem
  \csname reptheorem@#1\endcsname
  \endtheorem
  \endgroup
}
\xdef\csname reptheorem@#1\endcsname{%
    \unexpanded\expandafter{\BODY}%
  }%
\unskip\label{#1}\endtheorem
\newtheorem{theorem}{Theorem}
\newtheorem{lemma}[theorem]{Lemma}
\begin{document}

\title{Full Characterization of the Depth Overhead for Quantum Circuit Compilation with Arbitrary Qubit Connectivity Constraint}

\author{Pei Yuan}\thanks{peiyuan@tencent.com}
\author{Shengyu Zhang}\thanks{shengyzhang@tencent.com}
\affiliation{Tencent Quantum Laboratory, Tencent, Shenzhen, Guangdong 518057, China}

\maketitle

\begin{abstract}
 In some physical implementations of quantum computers, 2-qubit operations can be applied only on certain pairs of qubits. Compilation of a quantum circuit into one compliant to such qubit connectivity constraint results in an increase of circuit depth. Various compilation algorithms were studied, yet what this depth overhead is remains elusive. In this paper, we fully characterize the depth overhead by the \textit{routing number} of the underlying constraint graph, a graph-theoretic measure which has been studied for 3 decades. We also give reduction algorithms between different graphs, which allow compilation for one graph to be transferred to one for another. These results, when combined with existing routing algorithms, give asymptotically optimal compilation for all commonly seen connectivity graphs in quantum computing.
\end{abstract}

\section{Introduction}\label{sec:intro}

 Quantum computation has demonstrated a substantial advantage over its classical counterpart in solving significant problems such as integer factorization \cite{shor1999polynomial}, search \cite{grover1996fast} and a wide variety of others, by structurally designed algorithms {\cite{qzoo}} as well as variational algorithms {\cite{Cerezo2021-zn}}. 
Quantum hardware technologies have seen considerable advancements in recent years  \cite{IBMQ,ye2019propagation,arute2019quantum,gong2021quantum,ciorga2000addition,elzerman2003few,petta2004manipulation,schroer2007electrostatically,zajac2016scalable,childs2019circuit},
enabling the execution of quantum algorithms. A crucial aspect of this execution involves the compilation of quantum algorithms into quantum circuits, typically composed of $1$- and $2$-qubit gates.

Despite the theoretically proven advantages, the practical implementation of quantum algorithms and quantum circuits faces numerous challenges, one of which is the qubit connectivity constraint. On certain hardware platforms such as superconducting {\cite{IBMQ,arute2019quantum,gong2021quantum}}, quantum dots \cite{ciorga2000addition,elzerman2003few,petta2004manipulation,schroer2007electrostatically,zajac2016scalable} and cold atoms \cite{bloch2008quantum,buluta2011natural,bernien2017probing}, $2$-qubit gates can only act on certain pairs of qubits, while operations on two distant qubits are usually achieved by a sequence of SWAP operations {\cite{booth2018comparing,maslov2008quantum,shafaei2013optimization,shafaei2014qubit,kole2017new,lin2014paqcs,shrivastwa2015fast,li2019tackling}}. This qubit connectivity constraint can be naturally modeled by a connected \textit{constraint graph} $G=(V,E)$, where the vertex set $V$ represents the qubits, and the edge set $E$ specifies the pairs of qubits on which 2-qubit gates can act. A circuit respecting the $G$ constraint is termed \textit{$G$-compliant} in this paper.

Various constraint graphs exist on real hardware, including path \cite{IBMQ, kelly2015state}, bilinear chains \cite{IBMQ,ye2019propagation}, 2D-grids~\cite{arute2019quantum,gong2021quantum}, brick walls~\cite{IBMQ}, cycle-grids \cite{rigetti} and trees~\cite{IBMQ}. Future connectivity patterns may even broaden this variety. 
The wide diversity of the constraint graphs calls for systematic studies on the compilation overhead due to the connectivity constraint. In this paper, we focus on circuit depth, which typically corresponds to the circuit's execution time. In the NISQ era \cite{Preskill2018quantumcomputingin}, the execution time is particularly relevant as most variational quantum algorithms require executing the same circuit thousands of times and use the sample average to estimate the expectation of an observable on the circuit's final state. 

Several questions arise about the depth overhead. On a given constraint graph $G$, what is the smallest depth overhead, as a measure of the graph, that a compiler can possibly achieve? How do we actually compile a circuit achieving this minimal depth increase? When designing a quantum chip, how should we lay out the qubits to ensure their connectivity facilitates a small circuit depth overhead? In this paper, we address these questions by fully characterizing the depth overhead for \textit{any} given graph, with an explicit compilation algorithm given and the optimality shown. 

A widely adopted generic method compiling a quantum circuit under qubit connectivity constraint is to insert SWAP gates into the original circuit to bring (the states in) two distant qubits together, apply the two-qubit operations, and then move them back \cite{booth2018comparing,maslov2008quantum,shafaei2013optimization,shafaei2014qubit,kole2017new,lin2014paqcs,shrivastwa2015fast,li2019tackling}. This method is also the focus of this paper.

Before diving into details, let us specify the measure for depth overhead. {Take a common universal set of 1-qubit and 2-qubit gates} \footnote{{The 1-qubit gates are not restricted by the qubit connectivity. Common choices for the 2-qubit gates include CNOT, CZ, iSWAP, etc. Note that a SWAP gate can be easily realized by three CNOT gates.}}. Given an $n$-qubit circuit $C$ with depth $d(C)$ assuming all-to-all qubit connectivity, we need to compile it to a circuit $C'$ with depth $d(C')$ that respects the constraint graph $G$. We use the ratio of $d(C')/d(C)$ as the overhead measure for circuit instance $C$, aiming to compile any $C$ with a small overhead. Formally, we study the following measure $\doh$ (standing for \textit{depth overhead})
\begin{align} \label{eq:do}
    \doh (G) := \max_C \min_{C'\sim C: \atop G \text{-compliant}} \frac{d(C')}{d(C)}.
\end{align}
Here the minimum is over all $G$-compliant circuits $C'$ equivalent to $C$ and obtained from $C$ by inserting SWAP gates, and the maximum is over all $n$-qubit circuits $C$. Namely, we hope to find a good compiling algorithm $C\to C'$ such that the depth increase ratio $d(C')/d(C)$ is controlled for any possible input circuit $C$ \footnote{Here we use ratio $d(C')/d(C)$ rather than difference $d(C')-d(C)$ because $d(C')$ increases linearly with $d(C)$ for a generic circuit $C$, while the difference $d(C')-d(C)$ can be arbitrarily large. (Though it is also possible to consider multiple layers of $C$ together in compression, this semantic compression does not have much advantage for a \textit{generic} depth-$d$ circuit $C$, as each layer can be arbitrary and different layers do not admit a good compression. Even for the rare cases admitting significant semantic compression, the computational complexity of finding such a good compression is high.)}.

While there are a few compilation algorithms working for a few specific graph constraints \cite{booth2018comparing,maslov2008quantum,shafaei2013optimization,shafaei2014qubit,kole2017new,lin2014paqcs,shrivastwa2015fast,li2019tackling}, the studies fall short in two aspects. Firstly, the proposed routing algorithms for the specific graphs were not always optimal. For instance, Ref. \cite{Harrigan2021-qo} proposed a QAOA circuit that is hardware-compliant under a grid constraint. The method was to first find a long path in the grid and then to use the known SWAP routing algorithm on the path. For a grid of size $\sqrt{n}\times \sqrt{n}$, this results in a $O(n)$ depth overhead. Later we will show a superior routing algorithm with $O(\sqrt{n})$ depth overhead, and show that it is optimal. Secondly, the compilation algorithms so far are \textit{ad hoc} for different specific graphs, and no systematic studies on general graphs have been conducted. This paper is the first to provide a unified, provably optimal algorithm for \textit{all} graphs. 

\paragraph{Main results.} In this work, we present a unified algorithm for qubit routing for any given constraint graph $G$, with the depth overhead fully characterized by the routing number $\rt(G)$, a well-studied graph theoretic measure with a long history. We demonstrate that for all connected graphs $G$,
\begin{align}\label{eq:doh-rt}
    \doh(G) = \Theta(\rt(G)).
\end{align}

Specifically, on any graph $G$, we provide an algorithm to compile an arbitrarily given circuit $C$ (with no connectivity constraint) into another circuit $C'$ with the depth increase ratio bounded by $O(\rt(G))$ from above. Furthermore, we show that this is the best possible outcome---one cannot compile a generic $C$ with a depth increase factor asymptotically better than $O(\rt(G))$.

As a graph theoretic measure that has been studied for about three decades, the routing number $\rtg$ has been pinned down for many specific graphs $G$ such as paths, grids, trees, complete bipartite graphs, hypercubes, etc. \cite{alon1993routing,zhang1999optimal,banerjee2017new,li2010routing,Nenadov2023-re,banerjee2019sorting}. By combining our algorithm with these known routing methods, we immediately obtain compilation algorithms for quantum circuits under these graph constraints. 

Additionally, we introduce a reduction algorithm between different graphs, enabling us to derive efficient routing algorithms for new constraint graphs from existing algorithms on some basic graphs. To demonstrate this, we construct compilers for IBM's brick-wall graphs \cite{IBMQ} and Rigetti's cycle-grid graphs \cite{rigetti} by reducing SWAP networks on those graphs to the ones on the 2D-grid.

\paragraph{Previous work.} The result in Eq. \eqref{eq:doh-rt} bears resemblance to the canonical work on lower bounding quantum circuit size complexity by the geodesic distance on the unitaries manifold \cite{Nielsen05, Nielsen06}. However, our work diverges in two significant ways: (1) we provide matching upper and lower bounds, and (2) our approach is more operational as it presents an explicit algorithm to compile a given arbitrary circuit $C$ to a $G$-compliant circuit $C'$ with depth $d(C') = O(d(C)\cdot \rt(G))$. 

Ref. \cite{allcock2022does} explores the qubit connectivity constraint for general and special classes of unitaries, such as those for quantum state preparation (QSP), by exploiting techniques in earlier work \cite{10044235} and carefully arranging qubits such that most two-qubit gates act on nearby qubits. The paper shows that, somewhat surprisingly, the qubit connectivity constraint does not significantly increase circuit complexity for these classes of unitaries either in the worst or a generic case. For example, it gives a parametrized QSP circuit of depth $O(2^n/n)$ to generate a given arbitrary $n$-qubit quantum state, while even circuits \textit{without} the qubit connectivity constraint require the depth of the same order of depth \cite{PhysRevA.83.032302}. Though this might suggest that the qubit connectivity constraint does not increase circuit complexity, it is important to note that the worst-case or generic unitary matrices have exponentially high complexity. Thus Ref. \cite{allcock2022does} merely demonstrates that the connectivity constraint does not exacerbate these already complex cases. However, our primary concern in practice lies with efficient (e.g. polynomial depth) circuits---After all, these are the ones to be used in the future. Results in Ref. \cite{allcock2022does} do not provide insight into how the connectivity constraint affects these \textit{efficient} circuits, and particularly do not rule out the possibility that shallow circuits significantly suffer from the constraint. This work shows that, fortunately, this is not the case: If a circuit $C$ with all-to-all connections has depth $d$, then it can always be compiled to a $G$-compliant circuit $C'$ with depth at most $O(d\cdot \rt(G)) = O(dn)$. In particular, if $d(C)$ is a polynomial function of $n$, so is the depth of $C'$. 

Ref.~\cite{childs2019circuit} considered routing for partial permutations, in which one only needs to map $k<n$ vertices $x_i$ to $k$ other vertices $y_i$, and the rest $n-k$ vertices can be mapped arbitrarily. When the circuit depth is concerned with, the paper gave a reduction (``Greedy Depth Mapper'') from a partial routing protocol to circuit compilation. Their algorithm is essentially the same as ours in Lemma \ref{lem:ub-by-matching}. However, this compilation can be very loose as explained after Lemma \ref{lem:ub-by-matching}. They also gave a number of partial routing protocols, which may complement our result: They gave efficient routing methods, which may be utilized by our reduction to obtain efficient compilation methods (though one should also be careful about the difference between partial and full permutation).

\paragraph{Organization.} The rest of this paper is organized as follows. In Section \ref{sec:pre}, we introduce notations and review some previous results.
In Section \ref{sec:characterization}, we show the full characterization of the depth overhead. Then we demonstrate a reduction of routing numbers between different graphs and construct routing algorithms for practice qubit connectivity constraints in Section \ref{sec:reduction}. Finally, we discuss our results in Section \ref{sec:discussion}.

\section{Preliminaries}
\label{sec:pre}
\paragraph{Notation.}  Let $[n]:=\{1,2,\ldots,n\}$. Let $|S|$ denote the size of set $S$.
In this paper we study general undirected graphs $G=(V,E)$, where $V$ is the set of vertices and $E$ is the set of edges. 
We denote the number of vertices by $n$, and sometimes identify the vertex set $V$ with the set $[n]$. For a vertex $u\in V$ and a subset $S\subseteq V$, the neighbor of $u$ inside $S$ is $N_S(u) = \{v\in S: (u,v)\in E\}$. We drop the subscript $S$ and just write $N(u)$ if $S = V$. For a subset $S\subseteq V$, the \textit{induced subgraph} of $G$ on $S$ is 
\begin{equation}
    G|_S = (S,E') \text{ 
with } E'=\{(u,v)\in E: u\in S, v\in S\}.
\end{equation}
The \textit{diameter} of a graph $G$ is the largest distance of two vertices, denoted by  
\begin{equation}
  diam(G):=\max_{u,v\in V}d(u,v),  
\end{equation} 
where $d(u,v)$ is the distance of vertices $u$ and $v$ on graph $G$ (i.e. the number of edges on the shortest path between $u$ and $v$).
A \textit{matching} in an undirected graph $G=(V,E)$ is a vertex-disjoint subset of edges $M\subseteq E$.

\paragraph{Quantum circuit compilation and depth overhead.} The qubit connectivity constraint can be modeled by a connected graph $G = (V,E)$.  
A two-qubit gate can be applied to a pair of qubits $i,j\in V$ if and only if $(i,j)\in E$. We refer to $G$ as the \textit{constraint graph} and a circuit satisfying the above constraint as a \textit{$G$-compliant circuit}. When $G$ is the complete graph $K_n$, we say that the circuit has all-to-all connectivity.

The quantum circuit compilation problem studied in this paper is as follows: Given an $n$-qubit input quantum circuit $C$ consisting of 1- and 2-qubit gates with all-to-all connectivity, and a constraint graph $G$ with $n$ vertices (identified with the $n$ qubits), construct a $G$-compliant circuit $C'$ equivalent to $C$ by adding SWAP gates. Here two circuits are equivalent if they implement the same unitary operation. 
To measure the quality of the hardware-compliant circuit, we introduce the following concept of \textit{depth overhead}: 
    \begin{equation}
      \doh(G,C) := \min_{C'} d(C')/d(C),  
    \end{equation}
    where $d(C)$ and $d(C')$ denote the depth of circuits $C$ and $C'$, respectively, and the minimum is over all $C'$ that satisfy the above compilation requirement. The \textit{depth overhead} of a constraint graph $G$ is then defined as
    \begin{equation}
      \doh(G):=\max_{C}\doh(C,G). 
    \end{equation}
where the maximum is taken over all $n$-qubit circuits $C$ with all-to-all qubit connectivity. 

\paragraph{Permutations.}
The set of all permutations on set $[n]$ is denoted by $S_n$. A permutation $\pi\in S_n$ is a \textit{transposition} if $\pi=(a_1,a_2)(a_3,a_4)\cdots (a_{2k-1},a_{2k})$ with distinct $a_1,a_2,\ldots, a_{2k}$, where $(a_{2i-1},a_{2i})$ means to exchange $a_{2i-1}$ and $a_{2i}$. A permutation $\pi$ is a transposition if and only if it satisfies $\pi^2=id$, the identity permutation. It is well known that any permutation can be written as the composition of two transpositions.

\begin{lemma}[\cite{pinter2010book}]\label{lem:permutation_decom}
    Any $\pi\in S_n$ can be decomposed as $\pi=\sigma_1\circ\sigma_2$, where $\sigma_1,\sigma_2\in S_n$ are transpositions.
\end{lemma}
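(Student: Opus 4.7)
My plan is to reduce the statement to the case of a single cycle via the disjoint cycle decomposition of $\pi$, and then to exhibit two explicit involutions whose product realizes that cycle. First I would write $\pi$ in disjoint cycle form $\pi = c_1 \circ c_2 \circ \cdots \circ c_m$, where each $c_\ell$ acts on a subset $S_\ell \subseteq [n]$ and the $S_\ell$ are pairwise disjoint (fixed points of $\pi$ may be ignored). If each $c_\ell$ can be written as $\sigma_{1,\ell} \circ \sigma_{2,\ell}$ with both $\sigma_{j,\ell}$ supported inside $S_\ell$ and each a transposition in the sense defined above, then $\sigma_j := \sigma_{j,1} \circ \cdots \circ \sigma_{j,m}$ is again a transposition (its 2-cycles sit on pairwise disjoint supports and hence on distinct vertices), and $\sigma_1 \circ \sigma_2 = \pi$.

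It then remains to handle a single $k$-cycle $c = (a_1, a_2, \ldots, a_k)$, interpreted so that $c(a_i) = a_{i+1}$ for $i<k$ and $c(a_k) = a_1$. I would define the two ``reflections'' of the cyclic arrangement,
\[
\sigma_1 := \prod_{i=1}^{\lfloor k/2 \rfloor} (a_i,\, a_{k+1-i}), \qquad \sigma_2 := \prod_{i=1}^{\lfloor (k-1)/2 \rfloor} (a_i,\, a_{k-i}).
\]
The 2-cycles in each product are manifestly disjoint, so each $\sigma_j$ is a transposition. Using the convention $(\sigma_1 \circ \sigma_2)(x) = \sigma_1(\sigma_2(x))$, a direct index check shows that for $1 \le i \le k-1$ we have $a_i \mapsto a_{k-i} \mapsto a_{k+1-(k-i)} = a_{i+1}$, while $a_k \mapsto a_k \mapsto a_{k+1-k} = a_1$, exactly matching $c$.

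The main --- and essentially only --- obstacle is bookkeeping: one must verify that the index ranges above truly yield pairwise disjoint 2-cycles (with the correct fixed point(s) depending on the parity of $k$), and that the small cases $k\in\{1,2\}$ are handled (for $k=1$, take $\sigma_1 = \sigma_2 = id$; for $k=2$, the product in $\sigma_2$ is empty and $\sigma_1 = (a_1,a_2) = c$). Once this is checked, both the single-cycle computation and the gluing across disjoint cycles follow immediately.
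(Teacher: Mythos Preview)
Your argument is correct: the disjoint-cycle reduction is valid because the supports are disjoint, and for a single $k$-cycle your two ``reflections'' $\sigma_1$ and $\sigma_2$ are indeed involutions with $\sigma_1\circ\sigma_2$ equal to the cycle (your index computation checks out, including the edge cases $k\in\{1,2\}$ and the parity-dependent fixed points).

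As for comparison: the paper does not actually give a proof of this lemma. It is stated with a citation to \cite{pinter2010book} and used as a black box. Your explicit two-reflection construction is the standard textbook argument for this fact, so there is nothing to contrast methodologically---you have simply supplied the proof that the paper chose to outsource.
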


\paragraph{Routing number $\rt(G)$.} The routing number is defined by the following game {\cite{alon1993routing}}. Given a connected graph $G=(V, E)$ with $n$ vertices, we place one pebble at each vertex $v\in V$, and move the pebbles in rounds. In each round $i$, we are allowed to select a matching $M_i\subseteq E$ and swap the two pebbles at $u$ and $v$ for all edges $(u,v)\in M_i$. For any permutation $\pi\in S_n$, $\rt(G,\pi)$ is the \textit{minimum} number of rounds in which one can move all pebbles from their initial positions $v$ to the destinations $\pi(v)$.  For any graph $G$, the routing number is defined as 
\begin{equation}\label{eq:rt}
  \rt(G)=\max_{\pi\in S_n}\rt(G,\pi) .
\end{equation}

\section{Full characterization of depth overhead}
\label{sec:characterization}

In this section, we present a protocol for constructing a $G$-compliant circuit with the depth overhead of at most $O(\rt(G))$ for any given circuit $C$ in Section \ref{sec:upper}. 
Then we demonstrate that for any $G$, there exists a circuit $C$ with depth overhead at least $\Omega(\rt(G))$ in Section \ref{sec:lower}, thereby offering a complete characterization of the depth overhead of $G$.

Computing the depth overhead for a given $C$ and $G$ turns out to be NP-hard; see Appendix \ref{app:hardness} for the proof. 
However, the asymptotic behavior of $\doh(G)$ can be fully characterized by a graph measure called the \textit{routing number}, denoted by $\rt(G)$, of a graph $G$, as defined in Eq. \eqref{eq:rt}. 

\subsection{Hardware-compliant circuit construction and depth overhead upper bound} 
\label{sec:upper}

In this section, we first present a compiling algorithm by the maximum matching. Second, we give a graph partition algorithm. Third, based on the graph partition algorithm, we present the general compiling algorithm and the depth overhead upper bound.

Before diving into detailed constructions, let us first compare the measures $\dohg$ and $\rtg$, and explain why the upper bound $\doh(G) = O(\rt(G))$ does not immediately follow from their definitions, despite the apparent similarities. For easier comparison, let us formulate $\doh(G)$ in a language of games similar to that for the routing number $\rt(G)$ (Eq. \eqref{eq:rt}): we compile circuit $C$ layer by layer, and for each layer, suppose the two-qubit gates are on pairs $(u_1,v_1)$, $\ldots$, $(u_k, v_k)$, then we need to use SWAP gates to move $u_i$ and $v_i$ next to each other, apply the gate, and move them back. 
This formulation highlights immediate similarities between $\doh(G)$ and $\rt(G)$: (1) Both can be viewed as games in rounds, with each round consisting of SWAP operations. (2) Both measures represent the minimum number of rounds. 

However, also note that there are some \textit{key differences} between the two measures: (1) In circuit compilation, $u_i$ and $v_i$ need to be moved to \textit{next} to each other, while in graph routing, $i$ needs to be moved to $\pi(i)$. Note that given a routing algorithm to move each $i$ to $\pi(i)$ for any permutation $\pi$, it is still hard to move $u_i$ and $v_i$ next to each other because we cannot simply find a neighbor $v_i'$ of $v_i$ and let $\pi(u_i) = v_i'$; for example, if all $v_i$'s have degree $1$ and are all connected to a common ``port'' node $v$ to reach the rest of the graph, then all $v_i'$ equal to $v$, making the map $\pi (u_i) = v$ not a permutation. We will need to handle this type of bottleneck issue in designing the protocol for $\dohg$. (2) In  circuit compilation, it suffices that $u_i$ and $v_i$ are {next} to each other at \textit{some} time step $t_i\le \doh(G)$ (different pairs $(u_i,v_i)$ may have different $t_i$), while in graph routing, all $i$ need to be moved to $\pi(i)$ exactly at time $\rt(G,\pi)$. This gives us some freedom to design the $\dohg$ protocol.

One basic property that will be used later is a linear upper bound of $\rt(G)$ {\cite{alon1993routing}}:
\begin{equation}\label{eq:rt-n}
    \rt(G) \le {3n}.
\end{equation}

A routing protocol induces a SWAP circuit in a natural way: For any permutation $\pi$ on vertices in graph $G$ and any routing protocol in the definition of $\rt(G,\pi)$, if two pebbles at two vertices $i$ and $j$ are swapped, then we apply a swap operation on qubits $i$ and $j$ in the SWAP circuit. Then the following unitary transformation $U_\pi$
\begin{equation}\label{eq:unitary_permutation}
  |x_1\cdots x_n\rangle\xrightarrow{U_\pi} |x_{\pi(1)}\cdots x_{\pi(n)}\rangle, \quad \forall x_i\in\{0,1\}, i\in[n],
\end{equation}
can be realized by a $\rt(G,\pi)$-depth circuit consisting of swap operations. 

Before we give the general compiling algorithm, we first give a lemma which can compile circuits for graphs with a large matching.
\begin{lemma}\label{lem:ub-by-matching}
For any connected graph $G$ with the maximum matching size $\nu$, we can construct $G$-compliant circuits with the depth overhead at most $O(\rt(G)\cdot n/\nu)$. That is, $\doh(G) = O(\rt(G)\cdot n/\nu)$.
\end{lemma}
\begin{proof}
    Fix a given $n$-qubit circuit $C$. For each layer $C_i$ with at least one two-qubit gate, suppose that the 2-qubit gates are $C_{i1},\ldots, C_{ik}$ on pairs $\{(u_1,v_1 ),\ldots ,(u_k,v_k )\}$ of qubits, respectively, where $1\le k\le n/2$ and the $2k$ vertices $u_1,v_1,~\ldots~,u_k,v_k $ are all distinct. Let $\{(x_1,y_1),~\ldots~,(x_{\nu},y_{\nu } )\}$ be a maximum matching of $G$.
    We can compile this layer of $C_i$ to a $G$-compliant circuit in depth $O(\rt(G))$ as follows:
   \begin{enumerate}
       \item Apply all single-qubit gates in $C_i$. 
       
       \item \label{step:to-matching} Pick any permutation $\pi$ that permutes $u_j$ to $x_j$ and $v_j$ to $y_j$, for all $j\in [\nu]$. Run the circuit $U_\pi$ (in Eq.~\eqref{eq:unitary_permutation}) of depth at most $\rt(G)$.

       \item Apply the 2-qubit gates $C_{ij}$ on $(x_j,y_{j})$, for all $j\in[\nu]$;

       \item Run $U_\pi^\dagger$, the reverse process of Step \ref{step:to-matching}. 
   \end{enumerate}
   If $k\le \nu$, then this implements $C_i$ already in depth $2\cdot \rt(G)+2$.  If $k>\nu$, then this implements the first $\nu$ two-qubit gates among $k$ ones. Repeat the last three steps in the above procedure until all 2-qubit gates are handled, which needs $\lceil k/\nu \rceil$ iterations. Each iteration needs at most $2\cdot \rt(G)+1$ depth, thus the overall depth overhead is $1+(2\rt(G)+1)\cdot \lceil k/\nu \rceil=O(\rt(G)\cdot n/\nu)$. 
\end{proof}

Note that Lemma \ref{lem:ub-by-matching} can already give compiling algorithms with depth overhead at most $O(\rt(G))$ for some specific graphs $G$, including 1D-chain, 2D-grid, IBM's brick wall or Rigetti's bilinear cycle, binary tree, etc, all of which have a matching of size $\Theta(n)$. But for graphs with a small matching size (an extreme example is the star graph which has the maximum matching size $\nu = 1$), the bound of $O(\rt(G)\cdot n/\nu)$ is very loose. 

Lemma \ref{lem:ub-by-matching} has a clear intuition that the existence of a large matching facilitates moving the pebble around. Actually, it is even tempting to conjecture that this dependence of $O(n/\nu)$ is inevitable since a bottleneck in a graph does make simultaneous pebble moving inefficient due to traffic congestion. However, this bottleneck also affects $\rtg$ protocols and should be inherently characterized in the $\rtg$ measure. What we need to do is to technically relate the difficulty in the two measures and construct a reduction from one to the other. Next, we give details on how to remove the $O(n/\nu)$ factor in Lemma \ref{lem:ub-by-matching}.

To improve it to the optimal bound of $O(\rt(G))$, 
one idea is to partition the constraint graph into vertex-disjoint connected subgraphs, each having $O(1)$ diameter and containing at most $O(\rt(G))$ vertices. We aim to move each pair of qubits $(u_i,v_i)$ on which a two-qubit gate acts to one of these subgraphs (different pairs may go to different subgraphs). 
If this can be achieved, then we can implement two-qubit gates within one subgraph efficiently. Indeed, since each subgraph has diameter $O(1)$, it takes $O(1)$ SWAP gates to implement one gate and since the subgraph has size $\rt(G)$, all the gates inside this subgraph can be done by $O(\rt(G))$ SWAP gates, which takes at most $O(\rt(G))$ rounds. Also note that the routings in different subgraphs can be carried out in parallel. Thus the overall overhead is $O(\rt(G))$.

The challenge is that it is not always possible to achieve such a good partition of the constraint graph. 
We present a good graph partition algorithm in Lemma \ref{lem:graph_partition}, for which we will first show the following bottleneck lemma.

\begin{lemma}[bottleneck] \label{lem:bottleneck} 
    For a connected graph $G=(V,E)$, suppose that there exist vertex sets $V_1,V_2\subseteq V$ such that 
    \begin{enumerate}
        \item $V_1\cap V_2=\emptyset$;

        \item for any $u\in V_1$, $N(u)\subseteq V_2$, where $N(u):=\{w\in V: (w,u)\in E \}$.
    \end{enumerate}
    Then the routing number of $G$ satisfies 
    \begin{equation}
        \rt(G)=\Omega\left(\frac{|V_1|}{|V_2|}\right).
    \end{equation}
\end{lemma}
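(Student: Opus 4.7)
The strategy is a standard bottleneck/cut argument, exploiting the fact that the hypothesis on $V_1,V_2$ forces a narrow passage for pebbles in $V_1$. First observe that, since $V_1\cap V_2=\emptyset$ and $N(u)\subseteq V_2$ for every $u\in V_1$, the induced subgraph $G|_{V_1}$ has no edges: the only edges incident to $V_1$ are the ``boundary'' edges going from $V_1$ to $V_2$. In particular, a pebble sitting on a vertex of $V_1$ can leave its current position only by being swapped across a $V_1$-$V_2$ edge, and can reach a different vertex of $V_1$ only after eventually returning to $V_1$ via another $V_1$-$V_2$ edge.

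Next I construct a worst-case permutation. Assuming first $|V_1|\ge 2$, take $\pi\in S_n$ to be any fixed-point-free permutation on $V_1$ (for instance a single cyclic shift on $V_1$) and the identity on $V\setminus V_1$. Then for every $v\in V_1$ the pebble starting at $v$ must end at $\pi(v)\in V_1$ with $\pi(v)\neq v$, so in any routing protocol realizing $\pi$ this pebble must (i) leave $V_1$ at least once and (ii) re-enter $V_1$ at least once; each such event is a swap along a $V_1$-$V_2$ edge in which the pebble participates. Hence, summing over all $|V_1|$ such pebbles, the total number of pebble participations in $V_1$-$V_2$ boundary swaps, across the whole protocol, is at least $2|V_1|$.

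Now I upper bound the same quantity per round. In any single round the protocol uses a matching $M\subseteq E$. Because a matching is vertex-disjoint and every boundary edge has its $V_2$-endpoint in $V_2$, the number of boundary edges contained in $M$ is at most $|V_2|$. Each such swap involves exactly two pebbles (one leaving $V_1$, one entering $V_1$), contributing $2$ to the participation count. So per round at most $2|V_2|$ pebble participations in boundary swaps can occur. Comparing the two bounds gives $2T|V_2|\ge 2|V_1|$, hence any protocol needs $T\ge |V_1|/|V_2|$ rounds, which yields $\rt(G)\ge \rt(G,\pi)\ge |V_1|/|V_2|$. The degenerate case $|V_1|\le 1$ is handled trivially: the bound $|V_1|/|V_2|=O(1)$ and $\rt(G)\ge 1$ whenever $G$ is nontrivial (the case $|V_2|=0$ cannot occur for connected $G$ with $|V|\ge 2$, since then $V_1$ would be disconnected from the rest of $G$).

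\textbf{Main obstacle.} The only delicate point is the accounting in step two: one must argue that every $V_1$-pebble truly participates in at least two distinct boundary swaps, not just one. This uses both parts of the hypothesis essentially---disjointness of $V_1,V_2$ together with $N(V_1)\subseteq V_2$ gives the ``no internal $V_1$ edges'' conclusion, which in turn forces any nontrivial rearrangement of $V_1$-pebbles to cross the boundary twice per pebble. Once that is in place, the matching constraint on $V_2$ caps the boundary throughput per round, and the $\Omega(|V_1|/|V_2|)$ lower bound drops out immediately.
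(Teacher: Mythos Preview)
Your argument is correct and follows essentially the same bottleneck approach as the paper: construct a permutation that forces the pebbles initially in $V_1$ to cross into $V_2$, then cap the per-round boundary throughput by $|V_2|$ via the matching constraint. The only cosmetic differences are that the paper uses an involution pairing up $\lfloor |V_1|/2\rfloor$ vertices of $V_1$ (rather than your fixed-point-free cycle on all of $V_1$) and counts only the ``leave'' events, yielding $\rt(G)\ge \lfloor |V_1|/2\rfloor / |V_2|$; your double counting of leave and re-entry events recovers the slightly sharper constant $|V_1|/|V_2|$, but the asymptotic conclusion and the underlying idea are identical.
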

\begin{proof}
    Let us label the pebbles in $V_1$ as $1,2,\ldots, |V_1|$. Let $s:=\lfloor|V_1|/2\rfloor$. Define a permutation $\pi:=(1,1+s)(2,2+s)\cdots (s,2s)$. Note that if we move the pebble at vertex $i$ to the vertex $i+s$, the pebble must go through $V_2$ since $N(i)\subseteq V_2$ by assumption. Since at most $|V_2|$ pebbles in $V_1$ can go through $V_2$ in one round, moving $s$ pebbles needs at least $s/|V_2|$ rounds. Therefore, 
    \[\rt(G)\ge \rt(G,\pi)\ge s/|V_2|=\Omega(|V_1|/|V_2|).\]
\end{proof}

Next, we present the graph partition algorithm in Lemma \ref{lem:graph_partition}, which outputs two families of vertex sets $\mathcal{W}$ and $\mathcal{W}'$. 
\begin{lemma}\label{lem:graph_partition}
 There exists an algorithm which, on any $n$-vertex connected graph $G=(V,E)$, outputs two families of vertex sets 
 \begin{align}
     \mathcal{W} &=\{W_1,\ \ldots,\ W_s: W_i \subseteq V,\ \forall i\in[s]\},\\
     \mathcal{W'} &=\{W_1',\ \ldots,\ W_t': W_j' \subseteq V,\ \forall j\in[t]\},
 \end{align}
 for some $s,t\in[n]$, satisfying the following properties.
\begin{enumerate}
    \item (disjointness) For any distinct $i,i'\in[s]$ and distinct $j,j'\in [t]$, we have $W_i\cap W_{i'}=\emptyset$ and $W'_j\cap W'_{j'}=\emptyset$; \label{prop: disjointness}
    
    \item (coverage) $|\bigcup_{i\in[s]}W_i|+|\bigcup_{j\in[t]}W'_j|\ge n/2$; \label{prop: coverage}

    \item   (size bound) For any $i\in[s]$ and $j\in[t]$, we have $2\le |W_i|=O(\rt(G))$ and $2\le |W'_j|=O(\rt(G))$; \label{prop: size_ub}

    \item (diameter bound) For all $i\in[s]$ and $j\in [t]$, the induced subgraphs $G_i = G|_{W_i}$ and $G_j' = G|_{W_j'}$ all have diameter at most $2$.\label{prop: diameter} 
\end{enumerate}   
The algorithm runs in time $O(n^3)$.
\end{lemma}
\begin{proof}
The family $\mathcal{W}$ is constructed as follows. First find a maximal matching  
\begin{equation}
\mathcal{M} 
=\{(w_{1},w_{2}),\ (w_3, w_4),\ \ldots,\ (w_{2s-1}, w_{2s})\}
\end{equation}
of $G$. 
Put 
\begin{equation}
   \mathcal{W} = \{W_i: i \in[s]\}, \text{ where } W_i = \{w_{2i-1},w_{2i}\}\subseteq V, \ \forall i\in [s]. 
\end{equation}
Then $\mathcal{W}$ satisfies the properties \ref{prop: disjointness}, \ref{prop: diameter} and \ref{prop: size_ub}. Indeed, each $G|_{W_i}$ is essentially an edge and thus the two nodes are connected, different $G|_{W_i}$'s are disjoint as $\mcM$ is a matching, and $|W_i| = 2 = O(\rt(G))$. 

The family $\mathcal{W}'$ is constructed as follows. Define vertex set $T = V - \cup_{i\in[s]} W_i$, the vertices not in the maximal matching $\mcM$. Note that $T$ is an independent set, i.e. any two vertices $a,b\in T$ are not connected; otherwise, we could have added the edge $(a,b)$ in $\mathcal{M}$ to form a larger matching, contradicting $\mcM$ being maximal. 
Define set
\begin{equation}
S:=\big\{w\in  \bigcup_{i\in[s]}W_i: N_T(w) \neq \emptyset\big\}
\end{equation}
to contain those vertices with a connection to $T$. The family $\mathcal{W}'$ is constructed by Algorithm \ref{alg:set_construction}. 
\begin{algorithm}[!htb]
    \caption{Construction of $\mathcal{W}'$}
    \label{alg:set_construction}
    \textbf{Input}: Connected graph $G=(V,E)$, vertex sets $T,S\subseteq V$.\\
    \textbf{Output}: A family $\mathcal{W}'$ of sets.\\
    \begin{algorithmic}[1]
        \STATE Initialize $N^0_{w} := \emptyset$, $\forall w\in S$.
        \STATE Initialize $A_{0,p} := \emptyset$, $\forall p\in[|T|]$.
        \STATE $T_1:=T$, $S_1:=S$, $k_1:=|S_1|$.
        \FOR{ $p=1$ to $|T|$ }\label{line:outerloop}
            \FOR{$i=1$ to $|S|$}\label{line:innerloop} 
                \STATE $A_{i,p}:=\emptyset$.
                \IF {there are at least 1 and  at most $\lceil |T_p|/k_{p}\rceil$ neighbors of $w_i$ in $T_p - \bigcup_{r=1}^{i-1} A_{r,p}$} \label{line:condition1}
                    \STATE Let $A_{i,p}$ contain all these neighbors.\label{line:set1}
              \ELSIF{there are more than $\lceil |T_p|/k_{p}\rceil$ neighbors of $w_i$ in $T_p - \bigcup_{r=1}^{i-1} A_{r,p}$} \label{line:condition2}
                    \STATE Let $A_{i,p}$ contain arbitrary $\lceil |T_p|/k_{p}\rceil$ many of these neighbors.\label{line:set2}
              \ENDIF
              \STATE Let $N^p_{w_i}:=N^{p-1}_{w_i}\cup A_{i,p}$. \label{line:update N}
            \ENDFOR\label{line:innerloopend}
            \IF {$|\bigcup_{i=1}^{|S|} N^p_{w_i}| {\ge |T|/2} $} \label{line:stop-condition}
                \RETURN $\mathcal{W}':=\{{N^p_{w_i}}\cup \{w_i\}: |{N^p_{w_i}}| \ge 1,\ i \in[|S|]\}$ and end the whole program. \label{line:return}
            \ENDIF
            \STATE Set $T_{p+1}:=T_p\backslash  \bigcup_{r=1}^{|S|} A_{r,p}$. \label{line:update T}
            \STATE Set $S_{p+1}:= \{ w_i: |A_{i,p}|=\lceil |T_p|/k_p\rceil, i\in[|S|]\}$. \label{line:sp}
            \STATE Set $k_{p+1}:=|S_{p+1}|$.\label{line:kp}
        \ENDFOR
    \end{algorithmic}
\end{algorithm}

In the algorithm, $T_p$, $A_{i,p}$, $S_p$, $k_p$ and $N_{w_i}^p$ are defined as follows. The set $T_p$ contains those vertices in $T$ not selected in the first $p-1$ iterations. 
In the $p$-th iteration, the set $A_{i,p}$ denotes the neighbor set within $T_p-\bigcup_{r=1}^{i-1}A_{r,p}$ of vertex $w_i\in S$ with cardinality bounded by $\lceil|T_p|/k_p\rceil$.
The set $S_p$ contains all vertices $w_i\in S$ that have exactly $\lceil |T_{p-1}|/k_{p-1}\rceil$ neighbors being chosen in the $(p-1)$-th iteration.
The number $k_p$ denotes the size of vertex set $S_p$. 
The set $N_{w_i}^p$ contains all neighbors of vertex $w_i$ chosen in the first $p$ iterations.

We will first show that $\mathcal{W}'$ satisfies its corresponding properties in \ref{prop: disjointness} (disjointness) and \ref{prop: diameter} (diameter). 
For the disjointness property, we note that actually all $A_{i,p}$'s, for different $i$ and different $p$, are pairwise disjoint. Indeed, in each outer iteration $p+1$, the new set $T_{p+1}$ removes all sets $A_{r,p}$ in iteration $p$ (line \ref{line:update T}), thus any set $A_{r',p+1}$ selected from $T_{p+1}$ is disjoint from all sets $A_{r,p'}$ from previous iterations $p'\le p$. Now we check the sets $A_{i,p}$ for different $i$ in the same iteration $p$. Note that when we consider neighbors of $w_i$, we ignore those in previous inner iterations by only considering $T_p - \bigcup_{r=1}^{i-1} A_{r,p}$ (line \ref{line:condition1} and \ref{line:condition2}), thus the new $A_{i,p}$ are disjoint from $A_{r,p}$ for all $r<i$. 
Now that all $A_{i,p}$'s are pairwise disjoint, and all $w_i$'s are distinct, the sets $N_{w_i}^p\cup \{w_i\}$ in the definition of $\mcW'$ are pairwise disjoint as well. This shows the property of disjointness. 

As shown in lines \ref{line:set1} and \ref{line:set2}, all vertices in $A_{i,p}$ are connected with vertices $w_i$, which implies that all vertices in $N_{w_i}^p$ (line \ref{line:update N}) are connected with $w_i$. Therefore, the subgraph induced by $N_{w_i}^p \cup \{w_i\}$ has diameter at most $2$. 

With the above, we can next show that the algorithm ends and returns $\mcW'$ in line \ref{line:return}, i.e. the condition in line \ref{line:stop-condition} is satisfied at some iteration $p=\ell$, in at most $|T|/2 \le n/2$ iterations. We will show this by arguing that in each outer iteration $p$, each vertex $v\in T_p$ has at least one neighbor in $S_p$. Once we show this, we know that at least one $A_{i,p}$ is nonempty, for which $N_{w_i}^p$ has size strictly larger than that of $N_{w_i}^{p-1}$ (line \ref{line:update N}): All $A_{i,p}$'s are disjoint as shown above, thus $A_{i,p} \cap N_{w_i}^{p-1} = \emptyset$ and thus $|N_{w_i}^{p}| = |N_{w_i}^{p-1}| + |A_{i,p}| > |N_{w_i}^{p-1}|$. Therefore the set $\bigcup_{i=1}^{|S|} N^p_{w_i}$ strictly increases its size as $p$ grows. Thus the condition in line \ref{line:stop-condition} is met and the algorithm ends after at most $|T|/2$ outer iterations. 

Now we argue by induction that in each outer iteration $p$, any vertex $v\in T_p$ has at least one neighbor in $S_p$, and all neighbors of $v$ are in $S_p$. Namely, we have
\begin{equation}\label{eq:Nv_in_Sp}
\emptyset \neq N(v) \subseteq S_p, 
\end{equation}
for all $ p\in [|T|]$ and all $v\in T_p$. This is true for $p=1$ as $v$ does not have neighbors in $T_1 = T$, thus all its neighbors are in $W = \{w_1,\ w_2,\ \ldots, \ w_{2s}\}$. Furthermore, all its neighbors are actually in $S\subseteq W$, as $S$ exactly contains those $u\in W$ that have neighbors in $T$, i.e. $W-S$ does not have any edge to $T$. Therefore, $v$ has at least one neighbor and all $v$'s neighbors are in $S_1 = S$. For the induction step, let us assume $\emptyset \neq N_{S_p}(v) $ for each $v\in T_p$, and consider iteration $p+1$. For each $v\in T_{p+1}\subseteq T_p$, its neighbors $w_{i_1}, \ldots, w_{i_k}$ are all in $S_p$ and $k\ge 1$ by inductive hypothesis. But this $v$ is selected in line \ref{line:update T} to be in $T_{p+1}$. This happens must because it is not in $A_{r,p}$ for any $r\in |S|$, including $A_{i_j,p}$. As each $w_{i_j}$ has at least one neighbor $v\in T_p$, we know that the condition in line \ref{line:condition2} is satisfied, i.e. $w_{i_j}$ has more than $\lceil |T_p|/k_{p}\rceil$ neighbors in $T_p - \bigcup_{r=0}^{i_j-1} A_{r,p}$, but it then happens in line \ref{line:set2} that $v$ is not chosen into $A_{i_j,p}$. This means that $|A_{i_j,p}|=\lceil |T_p|/k_p\rceil$ and thus $w_{i_j}\in S_{p+1}$ (line \ref{line:sp}). Therefore, $v$'s neighbors $w_{i_j}$ are all in $S_{p+1}$, completing the inductive step.

We can also show that the algorithm never runs into the situation of $S_{p+1} = \emptyset$ in line \ref{line:sp} (and $k_{p+1}$ is always nonzero in the next line, justifying it being denominator in lines \ref{line:condition1}-\ref{line:set2}). Indeed, if $S_{p+1} = \emptyset$, it means that all $w_i\in S_p$ has $|A_{i,p}|\le \lceil |T_p|/k_{p}\rceil-1$ and line \ref{line:set1} is executed. But then $\bigcup_{i=1}^{|S|} N^p_{w_i}$ is the entire $T$, and thus the condition in line \ref{line:stop-condition} is satisfied and algorithm returns $\mcW'$ in line \ref{line:return} before line \ref{line:sp}. 

Next we show that $\mathcal{W}'$ satisfies property \ref{prop: size_ub}, namely $2\le |W'_j|=O(\rt(G))$. Assume that Algorithm \ref{alg:set_construction} stops in the $\ell$-th iteration of the outer loop. Since $N^\ell_{w_i}\cup \{w_i\}$ in $\mathcal{W}'$ satisfies $|N^\ell_{w_i}|\ge 1$ (line \ref{line:condition1} and \ref{line:condition2}), each set in $\mathcal{W}'$ has size at least $2$. We next prove that it has size at most $O(\rt(G))$.
           
Suppose $S_{\ell}:=\{w_{i_1}, \ldots, w_{i_{k_{\ell}}}\}$. Consider the process of obtaining $S_1, \ldots, S_\ell$ in the algorithm. In each outer iteration $p$, the algorithm checks each $w\in S_p$ and selects as many neighbors as possible, but up to $\lceil |T_p|/k_{p}\rceil$, to form $A_{i,p}$. If there are more than $\lceil |T_p|/k_{p}\rceil$ neighbors, then it continues to collect these neighbors in the next outer iteration. $S_p$ contains those $w_i\in S_{p-1}$ with $|A_{i,p-1}| = \lceil|T_{p-1}|/k_{p-1}\rceil$. So for any $w_i\in S_\ell$, the size of $A_{i,p}$, for $p=1, 2, \ldots, \ell-1$, is $\lceil|T_1|/k_1\rceil, \lceil|T_2|/k_2\rceil, \ldots, \lceil|T_{\ell-1}|/k_{\ell-1}\rceil$, respectively.  
And the size of the corresponding set $N^{\ell-1}_{w_{i}}$ has 
\begin{equation}\label{eq:size_set}
    |N^{\ell-1}_{w_{i}}|=\big|\bigcup_{r\in [\ell-1]} A_{i,r}\big|=\sum_{r=1}^{\ell-1}|A_{i,r}|=\sum_{r=1}^{\ell-1}\lceil |T_{r}|/k_{r}\rceil. 
\end{equation}
where the second equality uses the fact that all $A_{i,r}$'s are disjoint. 

{Define $C:=\bigcup_{w\in S_\ell}N_{w}^{\ell-1}$, the union of the corresponding sets of vertices in $S_\ell$.}  
Based on Eq. \eqref{eq:size_set} and the fact that all these $N_{w}^{\ell-1}$'s are disjoint, we have
\begin{equation}\label{eq:size_C}
    |C| = \sum_{w\in S_\ell}|N_{w}^{\ell-1}|=k_\ell\cdot \sum_{r=1}^{\ell-1}\lceil |T_{r}|/k_{r}\rceil.
\end{equation}
Since Algorithm \ref{alg:set_construction} did not stop in the $(\ell-1)$-th step, we have $\big| \bigcup_{w\in S}N_{w}^{\ell-1} \big|<|T|/2$. Then 
\begin{align*}
    |C| = &\big| \bigcup_{w\in S_\ell} N_{w}^{\ell-1} \big| 
    \le \big| \bigcup_{w\in S} N_{w}^{\ell-1} \big|<|T|/2,\\
    |T_\ell| = &\big|T_{\ell-1} - \bigcup_{r=0}^{|S|} A_{r,\ell-1}\big| = \cdots = \big|T - \bigcup_{p=1}^{\ell-1}\bigcup_{r=0}^{|S|} A_{r,p}\big| = \big|T - \bigcup_{r=0}^{|S|}N_{w_{r}}^{\ell-1}\big|\\
    \ge & |T|-|T|/2=|T|/2,
\end{align*}
Therefore $|T_\ell|\ge |T|/2>|C|$. 
Since $T_\ell\subseteq T$, $S_\ell\subseteq S$, it follows that $T_\ell\cap S_\ell=\emptyset$. We have showed that $N(u)\subseteq S_\ell$ for all $u\in T_\ell$ by Eq. \eqref{eq:Nv_in_Sp}, and thus can apply Lemma \ref{lem:bottleneck} to obtain that $\rt(G)= \Omega(\lceil|T_\ell|/|S_\ell|\rceil)=\Omega(\lceil|T_\ell|/k_\ell\rceil)$. 
Combined with Eq. \eqref{eq:size_C}, we have
\begin{align*}
    \rt(G)=& \Omega(\lceil|T_\ell|/k_{\ell}\rceil)  
     \ge \Omega(\lceil|C|/k_{\ell}\rceil) 
     =  \Omega\left(k_\ell\cdot \sum_{r=1}^{\ell-1}\lceil |T_{r}|/k_{r}\rceil/k_\ell\right)=\Omega\left(\sum_{r=1}^{\ell-1}\lceil |T_{r}|/k_{r}\rceil\right).
\end{align*}
Therefore, the routing number $\rt(G)$ satisfies
\begin{align*}
    \rt(G) = & \max\left\{\Omega\left(\sum_{r=1}^{\ell-1}\lceil |T_{r}|/k_{r}\rceil\right),\Omega\left(\lceil|T_\ell|/k_{\ell}\rceil\right)\right\} \\
= &\Omega\left(\sum_{r=1}^{\ell-1}\lceil |T_{r}|/k_{r}\rceil+\lceil|T_\ell|/k_{\ell}\rceil\right) \\
= & \Omega\left(\sum_{r=1}^{\ell}\lceil |T_{r}|/k_{r}\rceil\right).
\end{align*}

For arbitrary set $N_{w_i}^\ell\cup \{w_i\}$ in $\mathcal{W}'$,
\begin{align*}
    & |N_{w_i}^\ell\cup \{w_i\}| \\
    = &|N^{\ell-1}_{w_i}\cup A_{i,\ell}|+|\{w_i\}|=|N^{\ell-1}_{w_i}|+|A_{i,\ell}|+|\{w_i\}|,\\
    \le & \sum_{r=1}^{\ell}\lceil |T_{r}|/k_{r}\rceil+1,& (\text{Eq. }\eqref{eq:size_set})\\
    = & O(\rt(G)). & (\rt(G)= \Omega(\sum_{r=1}^{\ell}\lceil |T_{r}|/k_{r}\rceil))
\end{align*}
Therefore, $\mathcal{W}'$ satisfies property \ref{prop: size_ub}.

Recall that $T=V\backslash \bigcup_{i\in[s]} W_i$ and $\mathcal{W} = \{W_i: i\in[s]\}$. In the definition of $\mathcal{W}':=\{{N^p_{w_i}}\cup \{w_i\}: |{N^p_{w_i}}|\ge 1, i \in[|S|]\}$ (line \ref{line:return}), suppose there are $t$ many $i\in [|S|]$ satisfying $|{N^p_{w_i}}|\ge 1$, and denote the sets ${N^p_{w_i}}\cup \{w_i\}$ as $W_1', \ldots, W_t'$. The condition in line \ref{line:stop-condition} of Algorithm \ref{alg:set_construction} implies that 
\begin{equation}
    |\bigcup_{i\in[s]}W_i|+|\bigcup_{j\in[t]}W'_j|\ge n-|T|+ |T|/2=n-|T|/2\ge n/2,
\end{equation}
showing property \ref{prop: coverage} in the theorem.

Finally, we analyze the complexity. {Finding a maximal matching can be easily done in time $O(n^3)$ by repeatedly adding an edge $(u,v)$ into the matching set $M$ and removing $u$ and $v$ from the vertex set.} In the $i$-th inner loop, lines \ref{line:condition1}-\ref{line:update N} can be realized in time $O(|T|)$. Since there are $|S|$ inner loops, then lines \ref{line:innerloop}-\ref{line:innerloopend} can be realized in time $O(|S|\cdot |T|)$. In the $p$-th outer loop, the updates in lines \ref{line:update T}-\ref{line:kp} can be completed in $O(|T|+|S|)$ time. Since there are $|T|$ outer loops, the total time of Algorithm \ref{alg:set_construction} is \[|T|\cdot (O(|S|\cdot |T|)+O(|S|+|T|))=O(|T|^2|S|)=O(n^3).\] The total time for constructing $\mathcal{W}$ and $\mathcal{W}'$ is $O(n^{3})+O(n^3)=O(n^3)$.

\end{proof}

With this result, we can state and prove the main compilation algorithm next, from which it will also be clear why we need those properties of the two families. 

\begin{theorem}\label{thm:depthoverhead_ub}
    For any connected graph $G$, we have $\doh(G) = O(\rt(G))$. 
\end{theorem}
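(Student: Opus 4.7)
The plan is to show that each layer of the input circuit $C$ can be compiled into $G$-compliant gates of depth $O(\rt(G))$; summing over the $d(C)$ layers then yields $d(C') = O(d(C)\cdot \rt(G))$ and hence $\doh(G) = O(\rt(G))$.

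Fix a single layer, whose two-qubit gates form a matching $M$ with $|M|\le n/2$ (1-qubit gates add only constant depth and are immaterial to routing). First I would invoke Lemma \ref{lem:graph_partition} to obtain the two disjoint families of subgraphs $\mathcal{W}$ and $\mathcal{W}'$, each subgraph having diameter at most $2$ and size $|W|\in[2,O(\rt(G))]$, and together covering at least $n/2$ vertices. Since a subgraph on $|W|$ vertices can absorb up to $\lfloor |W|/2\rfloor\ge |W|/3$ qubit pairs, the aggregate ``pair capacity'' of $\mathcal{W}\cup\mathcal{W}'$ is $\Omega(n)$. A greedy packing therefore assigns $\min(|M|,\Omega(n))$ pairs from $M$ into subgraphs, with both endpoints of every assigned pair living inside a single subgraph.

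Given such an assignment, I would define a permutation $\pi\in S_n$ that transports the two endpoints of each assigned pair into their target subgraph (unassigned qubits may be permuted arbitrarily). Realizing $\pi$ via the SWAP circuit $U_\pi$ of Eq.~\eqref{eq:unitary_permutation} costs depth $\rt(G,\pi)\le \rt(G)$. Within each subgraph $W$, the $\le |W|/2$ assigned two-qubit gates are then executed sequentially: because $W$ has diameter at most $2$, each gate needs at most one extra SWAP to make its pair adjacent, so clearing all gates in $W$ costs depth $O(|W|)=O(\rt(G))$; different subgraphs proceed in parallel as they are vertex-disjoint. Finally, applying $\pi^{-1}$ restores the original positions at depth $O(\rt(G))$. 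A single such round thus clears $\Omega(n)$ gates from $M$ at depth cost $O(\rt(G))$.

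Because $|M|\le n/2$ and each round removes $\Omega(n)$ pairs, only $O(1)$ rounds are needed to exhaust the layer, giving a per-layer compiled depth of $O(\rt(G))$ as required. The main obstacle in the argument is the packing step---showing that an arbitrary matching of up to $n/2$ pairs can be routed so that $\Omega(n)$ of them simultaneously fit into the capacitated subgraphs of Lemma \ref{lem:graph_partition}---but this reduces to the observation that each pair may be freely assigned to any subgraph and the aggregate capacity of $\mathcal{W}\cup\mathcal{W}'$ is $\Omega(n)$, so a straightforward greedy procedure fills subgraphs until either the matching or the capacity is exhausted.
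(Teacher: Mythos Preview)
Your approach is essentially the paper's: use Lemma~\ref{lem:graph_partition} to get small-diameter, bounded-size subgraphs with $\Omega(n)$ total coverage, route $\Omega(n)$ pairs per round into them at cost $\rt(G)$, execute gates inside each subgraph in depth $O(\rt(G))$, and repeat $O(1)$ times.

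There is one genuine gap. You treat $\mathcal{W}\cup\mathcal{W}'$ as a single family of pairwise vertex-disjoint subgraphs and rely on that for the parallel execution step (``different subgraphs proceed in parallel as they are vertex-disjoint''). But Lemma~\ref{lem:graph_partition} only guarantees disjointness \emph{within} each family: property~\ref{prop: disjointness} says $W_i\cap W_{i'}=\emptyset$ and $W'_j\cap W'_{j'}=\emptyset$, not $W_i\cap W'_j=\emptyset$. In fact, in the construction the sets $W'_j$ are of the form $N^p_{w_i}\cup\{w_i\}$ with $w_i$ drawn from $\bigcup_i W_i$, so the two families \emph{do} overlap. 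Routing pairs simultaneously into overlapping subgraphs is not well-defined, and the parallel-execution argument breaks.

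The fix is immediate and is exactly what the paper does: from $|\bigcup_i W_i|+|\bigcup_j W'_j|\ge n/2$ conclude that one of the two families alone covers at least $n/4$ vertices, and work with that family only. Its pair capacity is $\sum_i \lfloor |W_i|/2\rfloor \ge \frac{1}{3}\sum_i |W_i|\ge n/12 = \Omega(n)$, which is all you need. (A minor point: with diameter $\le 2$ each gate costs up to two extra SWAPs---swap in, apply, swap back---not one; the paper counts $3w_i$ $G$-gates per subgraph. Your $O(|W|)$ conclusion is unaffected.)
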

\begin{proof}
For any graph $G=(V,E)$, we use Lemma \ref{lem:graph_partition} to find two families of sets $\mathcal{W}=\{W_i:~\forall i\in[s]\}$ and $\mathcal{W}'=\{W_j':~\forall j\in[t]\}$ satisfying the properties in the lemma. Since $|\bigcup_{i\in[s]}W_i |+|\bigcup_{j\in[t]}W'_j|\ge n/2$, at least one of $|\bigcup_{i\in[s]}W_i|$ and $|\bigcup_{j\in[t]}W'_j|$ is of size at least $n/4$. Without loss of generality, we assume that $|\bigcup_{i\in[s]}W_i|\ge n/4$. 
{Consider any one layer of a given circuit. Suppose it has $k$ ($k\le \lfloor n/2\rfloor$) 2-qubit gates, which act on pairs $(i_1, j_1), (i_2, j_2), \ldots, (i_k, j_k)$ of qubits. Note that qubits $i_r$ and $j_r$ are generally not adjacent on $G$. We call these the original gates, to distinguish from the compiled ones that are $G$-compliant, which will be referred to as \textit{$G$-gates}.} 

{We now show how to compile one layer of 2-qubit original gates to a $G$-compliant circuit of depth at most $O(\rt(G))$.}
\begin{enumerate}
    \item {Take a permutation $\pi$ that moves the first $\gamma_1 = \lfloor |W_1|/2\rfloor$ pairs $(i_1, j_1), \ldots, (i_{\gamma_1}, j_{\gamma_1})$ to inside $W_1$, and the next $\gamma_2 = \lfloor |W_2|/2\rfloor$ pairs $(i_{\gamma_1+1}, j_{\gamma_1+1}), \ldots, (i_{\gamma_1+\gamma_2}, j_{\gamma_1+\gamma_2})$ to inside $W_2$, and so on, until we move $\gamma_s = \lfloor |W_s|/2\rfloor$ pairs to $W_s$. Implementing this permutation needs at most $\rt(G)$ rounds (see Eq. \eqref{eq:unitary_permutation}). Now the problem of implementing the $\sum_s \gamma_s$ original gates $U_{i_r,j_r}$ is reduced to implementing the corresponding gates $U_{p,q}$ with $p$ and $q$ in some $W_i$.}
    \item Inside $W_i$ for each $i\in [s]$, we can implement all $\gamma_i$ 2-qubit gates $U_{p,q}$ in depth at most $3 \gamma_i$. Indeed, for each gate $U_{p,q}$ on qubits $(p,q)$ in $W_i$, since $diam(G_i)\le 2$, either $(p,q)\in E$ or there is another vertex $r\in W_i$ connecting $p$ and $q$. In the former case, we can apply the gate directly. In the latter case, we can swap $p$ and $r$, apply the gate on $(r,q)$, and swap $p$ and $r$ back. In any case, we can implement one gate by at most three $G$-gates, thus $3\gamma_i$ $G$-gates suffice to implement all $\gamma_i$ 2-qubit gates in $W_i$. Note that each $|W_i| = O(\rt(G))$ by Lemma \ref{lem:graph_partition}, thus the depth is $O(\rt(G))$ by Eq. \eqref{eq:rt-n}.

    \item By repeating the above two steps $\lceil k/(\sum_{i=1}^s \gamma_i)\rceil$ iterations, we can implement all of the original 2-qubit gates. {Since $k\le n/2$, $\gamma_i = \lfloor |W_i|/2 \rfloor$, and $\sum_i |W_i| \ge n/4$, the number of iterations is at most $O(1)$.} 
    
    \item Move all qubits to their original positions, which takes at most $\rt(G)$ rounds.
\end{enumerate}
Putting everything together, the depth overhead for compiling one layer of the original circuit is at most $O(\rt(G))$. Applying this to all layers completes the proof. 
\end{proof}

\paragraph{Remark.} Though the above theorem is only on the depth overhead, from the proof we can see that actually 
our construction from an $\rtg$ protocol to a $\dohg$ protocol can be done very efficiently. Since there are efficient routing protocols for many commonly seen specific graphs $G$, our construction enables us to obtain efficient compilation for quantum circuits. To be more specific, given any qubit connectivity graph $G$, we only need $O(n^3)$ classical pre-processing time for finding $\mathcal W$ and $\mathcal W'$ in Lemma \ref{lem:graph_partition}, which needs to be computed only once and can be used for compiling any circuit. For each quantum circuit, the compilation time is merely $O(n)$, because it is easily verified that all steps in the above proof just need to identify some permutation $\pi$ and run the given routing algorithms in $\rtg$.

\subsection{Depth overhead lower bound} 
\label{sec:lower}
The following theorem gives a lower bound of the depth overhead, which matches the upper bound in Theorem \ref{thm:depthoverhead_ub}.

\begin{theorem}\label{thm:depthoverhead_lb}
    For any connected graph $G$, we have $\doh(G) = \Omega(\rt(G))$.
\end{theorem}
\begin{proof}   
Suppose $\rt(G) = \rt(G,\pi^*)$, namely $\pi^*$ is the hardest permutation in the definition of $\rt(G)$. By Lemma \ref{lem:permutation_decom}, $\pi^*$ can be decomposed into two compositions. Thus there is an unconstrained SWAP circuit $C^*$ of depth 2 realizing $\pi^*$. 
By definition of $\doh(G)$, we have 
\[\doh(G) = \max_C \min_{C'\sim C: \atop G\text{-compliant}} \frac{d(C')}{d(C)} \ge \min_{C'\sim C: \atop G\text{-compliant}} \frac{d(C')}{d(C^*)} = \min_{C'\sim C: \atop G\text{-compliant}} \frac{d(C')}{2}.\]
Take a $C'$ achieving the minimum in the above ratio. Since $C^*$ is a SWAP circuit and $C'$ is obtained from $C$ by inserting SWAP gates, $C'$ is also a SWAP circuit. Therefore by the correspondence in Eq. \eqref{eq:circuit-routing-correspondence}, we obtain a routing algorithm $A^*$ of $d(C')$ rounds which can realize $\pi^*$. Since this is one particular routing algorithm realizing $\pi^*$, 
we have 
\[\rt(G) = \rt(G,\pi^*) \le \text{ the number of rounds in }A^* = d(C') = 2\cdot \doh(G).\]
Therefore $\doh(G) \ge \rt(G)/2 = \Omega(\rt(G))$.
\end{proof}

Combining Theorems \ref{thm:depthoverhead_ub} and \ref{thm:depthoverhead_lb}, we see that the depth overhead of a connectivity graph $G$ is fully characterized by its routing number, i.e., $\doh(G)=\Theta(\rt(G))$. 
This result shows that our algorithm in 
Theorem \ref{thm:depthoverhead_ub} is asymptotically optimal. 
The characterization also gives quantitative guidance for the qubit layout and connectivity design of quantum processors when the depth overhead is considered.

\section{Routing number for many common graphs and reduction between graphs}
\label{sec:reduction}

In this section, we demonstrate a reduction of routing numbers between different graphs and construct routing algorithms for cycle-grids and brick walls.

Many routing algorithms for different specific graphs have been widely investigated, see Table \ref{tab:routing_number} for a summary. 
{These algorithms, combined with our general algorithm in Theorem \ref{thm:depthoverhead_ub}, give optimal routing algorithms for many existing connectivity graphs}
such as paths, bilinear chains~\cite{IBMQ,ye2019propagation}, 2-dimensional grids~\cite{arute2019quantum,gong2021quantum}, and trees~\cite{IBMQ}. {This improves some of previous SWAP algorithms. For example, } Ref. \cite{Harrigan2021-qo} proposed a circuit ansatz for QAOA, but to make it hardware-compliant for grid constraints, their algorithm has a depth overhead of $O(n)$ on a $\sqrt{n}\times \sqrt{n}$ 2D-grid. Using Theorem \ref{thm:depthoverhead_ub} and the result for 2D-grids in \cite{alon1993routing}, we easily achieve a depth overhead of $O(\sqrt{n})$, quadratically improving the previous one and being the best possible.

While the routing number for the above graphs has been well studied, it has not been studied for graphs that are less commonly seen in graph theory but typical in quantum computing, such as IBM's brick walls and Rigetti's cycle-grids. 
Lemma \ref{lem:ub-by-matching} can be used to solve some graphs, and here we provide another method based on reduction, which can give good routing algorithms for more graphs.

\begin{table}[]
    \centering
    \caption{A summary for the routing numbers of different graphs. 
    }
    \label{tab:routing_number}
    \resizebox{\textwidth}{!}{\begin{tabular}{c c c c}
    \hline
    \hline
     {\bf Graph $G$}    & ~\bf Number of vertices~ & ~\bf Upper bound of $\rt(G)$~ & ~\bf Reference  \\
     \hline
      \multirow{2}*{Tree $T_n$}   & \multirow{2}*{$n$} & $3n$ & \cite{alon1993routing}\\
                &&$\lfloor 3n/2\rfloor+O(\log(n))$ &\cite{roberts1995routing}\\
    Complete $d$-ary tree $T_n^d$ & $n$ &$n+o(n)$ &  \cite{roberts1995routing}\\
    Complete bipartite graph $K_{s,t}$ ($s>t$) & $s +t$ &$\lfloor 3s/2t\rfloor+7$ &\cite{li2010routing}\\
    $2$-dimensional grid $Grid_{n_1,n_2}$ ($n_1\le n_2$) & $n_1n_2$ & $2n_1+n_2$ & \cite{alon1993routing}\\
    Path $P_n$ & $n$ & $n$ & \cite{alon1993routing}\\
    Hypercube $Q^n$ & $2^n$ & $2n-2$  & \cite{alon1993routing}\\
    \hline
    \hline
     \end{tabular}}
\end{table}

We will in particular need one result for the Path graph (i.e. 1D-chain). 
\begin{lemma}[\cite{alon1993routing}]\label{lem:path_rout}
    Let $P_n$ denote a path with $n$ vertices, then $\rt(P_n) =n $.
\end{lemma}

Now we prove Theorem \ref{thm:reduction_routing}, which gives a reduction between routing numbers of two graphs. 
\begin{theorem}\label{thm:reduction_routing}
    Let $G=(V,E)$ and $G'=(V,E')$ be two connected graphs with the same vertex set $V$. 
    Suppose that $E'- E :=\{e: e \in E' \text{~and~} e\notin E\}$ can be partitioned into $ \bigcup_{i=1}^c E_i'$ such that the following two conditions hold.
    \begin{enumerate}
        \item $E_i' \cap E_j' = \emptyset$ for arbitrary distinct $i,j\in [c]$.
    
        \item $\rt(G,\pi_i) \le c'$ for each $i\in [c]$, where $\pi_i = \circ_{(u,v)\in E_i'} (u,v)$ exchanges the two ends of each edge in $E_i'$.
        
    \end{enumerate}
Then $\rt(G) \le (1+cc')\cdot \rt(G')$.
\end{theorem}

\begin{proof}
Any permutation $\pi$ on graph $G'$ can be realized by a routing algorithm of at most $\rt(G')$ rounds. In one round, we swap two pebbles at vertices $u,v$ if $(u,v)\in M$ for some matching $M \subseteq E'$ of $G'$. Now we demonstrate how to implement this round on the graph $G$.
We partition edges in $M$ into at most $c+1$ sets:
\begin{align}
    & M'_0:=\{(u,v): \forall (u,v)\in M \cap E\},\\
    & M'_i:=\{(s,t): \forall (s,t)\in M \cap E_i'\}, \quad i = 1, 2, \ldots, c.
\end{align}
It takes one round to swap pebbles on $u$ and $v$ for all $(u,v)\in M_0'$ as all these $(u,v)$ are also edges in $G$. {For each $i\in [c]$, the edges in $M_i'$ form a permutation $\pi_i$, which can be realized by a routing algorithm of $c'$ rounds by the second condition.} 
Therefore, one round of routing on graph $G'$ can be implemented by a routing of $(1+cc')$ rounds on graph $G$. Repeating this for all $\rt(G')$ rounds gives $\rt(G) \le (1+cc')\cdot \rt(G')$. 
\end{proof}

One simple scenario in which the second condition in Theorem \ref{thm:reduction_routing} holds is that there are vertex-disjoint paths $P_{uv}$ connecting $u$ and $v$ in $G$ of length at most $c'$ for all $(u,v)\in E_i'$, for each $i\in [c]$. Indeed, in this case, the routing algorithm in $rt(G,\pi_i)$ can be done by simply following these paths $P_{uv}$, which are vertex-disjoint and thus enable parallel routing. 
Next, we utilize this fact to design routing algorithms for the cycle-grids and brick walls.

\begin{figure}
    \centering
    \includegraphics[width=\textwidth]{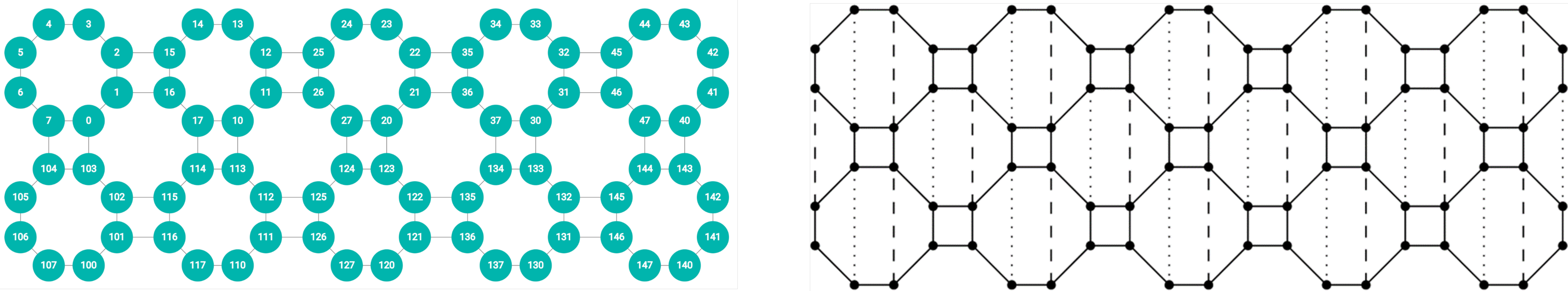}
    \caption{Left: The layout of 80 qubits in Rigetti's Aspen-M chip series. Right: A cycle-grid graph and a grid constructed from it. The cycle-grid graph consists of vertices and solid edges. A grid of size $4\times 20$ is constructed by adding dotted and dashed edges.} 
   \label{fig:rigetti}
\end{figure}

Rigetti's Aspen-M chip series has the qubit connectivity graph as in Fig. \ref{fig:rigetti}, which is a $2\times 5$ grid with each (super)node being a cycle of length 8, and adjacent (super)nodes connected by two edges. The grid sizes and the cycle length can vary. The routing and circuit compilation for such graphs can be reduced to those for grids by inserting edges. 
Specifically, we add two vertical edges to each cycle, which results in a grid of size $4\times 20$, as shown in Fig. \ref{fig:rigetti} (right). The newly added edge set is partitioned into two disjoint edge sets, the dotted edge set $E_1$ and the {dashed} edge set $E_2$. For each $i\in [2]$, all edges $(u,v) \in E_i$ have paths of length $3$ between $u$ and $v$ in the cycle-grid graph, and all paths are vertex-disjoint. Thus, we can apply Theorem \ref{thm:reduction_routing} with parameters $c=2$ and $c'=3$, obtaining routing algorithms and circuit compilation from those on the 2D grid of size $4\times 20$.

We can also apply this reduction result on brick walls. For integers $n_1,n_2\ge 1$, $b_1\ge 2$, $b_2\ge 3$, $b_1<b_2$ and $b_2$ odd, the $(n_1,n_2,b_1,b_2)$-brick wall graph contains $n_1$ layers of $n_2$ ``bricks'', with each brick being a rectangle containing $b_1$ vertices on each ``vertical'' edge and $b_2$ vertices on each ``horizontal'' edge. In IBM's brick wall chips \cite{IBMQ}, $b_1=3$ and $b_2=5$.
\begin{figure}[]
    \centering
    \includegraphics[width=0.65\textwidth]{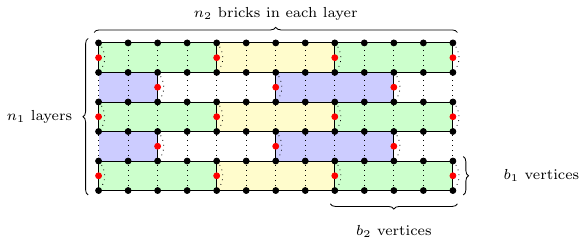}
   \caption{
   {An $(n_1,n_2,b_1,b_2)$-brick wall graph and a grid constructed from it. The $(n_1,n_2,b_1,b_2)$-brick wall graph consists of black and red vertices and solid edges.} A grid is constructed by (i) removing the vertices in the middle of each vertical edge (the red vertices), and (ii) adding dotted edges.
   Bricks are divided into $4$ groups, indicated by the green, white, yellow and blue colors, where the bricks of the same color are vertex-disjoint.
   }
   \label{fig:brickwall_grid}
\end{figure}

The reduction of brick walls to grids is more complicated because the reduction changes the vertex set. Yet a reduction in the same spirit can still be achieved as follows. 

\begin{theorem}
    For an $(n_1,n_2,b_1,b_2)$-brick wall ${\tt Brickwall}_{n_1,n_2}^{b_1,b_2}$, we have 
    \begin{equation}
        \rt({\tt Brickwall}_{n_1,n_2}^{b_1,b_2})=O((b_1+b_2)(n_1+b_2n_2)).
    \end{equation}
\end{theorem}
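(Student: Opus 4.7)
The strategy is to reduce routing on the $(n_1,n_2,b_1,b_2)$-brick wall $B$ to routing on the auxiliary grid $G'$ shown in Figure~\ref{fig:brickwall_grid}, obtained from $B$ by deleting the $b_1-2$ red vertices in the middle of each vertical brick-edge and inserting dotted edges between their endpoints. By the construction in Figure~\ref{fig:brickwall_grid}, $G'$ is a rectangular grid of dimensions $O(n_1)\times O(b_2 n_2)$, so by the grid bound of~\cite{alon1993routing} we have $\rt(G')=O(n_1+b_2 n_2)$. Because $B$ and $G'$ do not share the same vertex set (the grid lacks the red vertices), Theorem~\ref{thm:reduction_routing} cannot be applied verbatim; the plan is to sandwich a simulated $G'$-routing between a preprocessing and a postprocessing phase that handle the red vertices via within-brick permutations.

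In the preprocessing, for each brick I apply a permutation that moves pebbles currently at red vertices to predetermined non-red positions (and the displaced non-red pebbles onto red ones), so that after the phase every pebble destined for another brick sits at a non-red vertex. Each brick has $O(b_1+b_2)$ vertices and diameter $O(b_1+b_2)$, so by Lemma~\ref{lem:path_rout} any within-brick permutation can be realized in $O(b_1+b_2)$ rounds along a Hamiltonian path of the brick. The four-color partition of bricks (green, white, yellow, blue) into pairwise vertex-disjoint classes, shown in Figure~\ref{fig:brickwall_grid}, allows all per-brick permutations to be carried out in $O(1)$ color-parallel phases, for a total of $O(b_1+b_2)$ rounds.

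Next I simulate each round of the optimal $G'$-routing on $B$. A round is a matching $M\subseteq E(G')$ of two types of edges: (a) edges already in $E(B)$, which are swapped in a single round; and (b) dotted edges, each bridging a length-$(b_1-1)$ path in $B$ through $b_1-2$ red vertices on a vertical brick-edge. A swap across a dotted edge is simulated by bubbling the pebble at one endpoint through the red vertices to the other endpoint, then bubbling the displaced pebble back along the reversed path; this takes $O(b_1)$ rounds and restores every intermediate red-vertex pebble to its original position. Different dotted edges of $M$ sit on distinct vertical brick-edges whose red vertex sets are disjoint, so all simulations proceed in parallel, and one $G'$-round costs $O(b_1)$ brick-wall rounds. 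A symmetric postprocessing phase of $O(b_1+b_2)$ rounds then places each pebble at its true destination, red vertices included.

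Summing the three phases yields $\rt(B)=O(b_1+b_2)+O(b_1)\cdot \rt(G')+O(b_1+b_2)=O((b_1+b_2)(n_1+b_2 n_2))$, matching the claimed bound. The main obstacle is the consistent design of the preprocessing and postprocessing permutations: for each pebble I must pre-assign which non-red vertex of its source brick it occupies just before the $G'$-routing, and which non-red vertex of its destination brick it lands at just after, so that (i) the induced non-red-vertex permutation is realizable by a $G'$-routing, and (ii) the within-brick permutations both remain valid. Since within-brick routings are essentially unconstrained up to the $O(b_1+b_2)$ bound and the $\Theta(b_2)$ non-red vertices of each brick can absorb all brick-to-brick traffic (if necessary after splitting into a constant number of sub-phases), such a consistent assignment exists, completing the argument.
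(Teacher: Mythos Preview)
Your reduction to the auxiliary grid $G'$ and the plan to sandwich the simulated $G'$-routing between within-brick permutations is close to the paper's argument, but the simulation step contains a concrete error. You assert that every dotted edge of $G'$ bridges a length-$(b_1-1)$ path in $B$ through $b_1-2$ red vertices on a vertical brick-edge, and hence can be simulated in $O(b_1)$ rounds by bubbling. This is false: in the construction of Figure~\ref{fig:brickwall_grid} the dotted edges are \emph{all} the vertical edges of $G'$, and only the two at the corners of each brick sit over an actual vertical brick-edge of $B$. The remaining $b_2-2$ interior dotted edges in each brick join two black vertices whose only connection in $B$ is around the brick cycle $C_{2b_2+2b_1-4}$; no length-$(b_1-1)$ path exists for them. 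Your parallelization claim (``different dotted edges of $M$ sit on distinct vertical brick-edges'') therefore also fails. The paper's fix is to treat all dotted-edge swaps lying inside one brick as a single routing instance on that cycle, costing $O(b_1+b_2)$ rounds, and to use the four-color brick partition so that same-color bricks are vertex-disjoint and can be routed simultaneously. With this correction the per-round simulation cost becomes $O(b_1+b_2)$ rather than $O(b_1)$, which still yields the stated final bound.

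For the red vertices, the paper takes a different route from your sandwich. It first decomposes $\pi$ into two transpositions via Lemma~\ref{lem:permutation_decom}, then splits each transposition into black--black, red--red, and black--red pairs; the latter two are reduced to the black-only case by swapping each red endpoint with a nearby black vertex inside its brick, running the black-only routing, and swapping back. This works cleanly because each brick has at least as many black as red vertices and the transposition structure guarantees the displaced black pebbles stay put during the middle phase. Your single preprocess/route/postprocess does not obviously work when too many pebbles must leave a brick relative to its black capacity, and your appeal to ``a constant number of sub-phases'' is the right idea but is not substantiated---you still owe an explicit decomposition of $\pi$ into $O(1)$ permutations each admitting a valid sandwich. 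The paper's transposition-and-color splitting is one concrete way to supply exactly that.
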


\begin{proof}
For ease of presentation, we color the vertices in ${\tt Brickwall}_{n_1,n_2}^{b_1,b_2}$ black and red as follows. Recall that each brick is a rectangle containing $b_1$ vertices on each vertical edge and $b_2$ vertices on each horizontal edge. Color the $2b_2$ vertices on the horizontal edges black, and the rest $2(b_1-2)$ vertices red; see Fig. \ref{fig:brickwall_grid}. 
We will first show that any permutation $\sigma$ on \textit{black} vertices has
\begin{equation}\label{eq:brackwall-black}
  \rt({\tt Brickwall}_{n_1,n_2}^{b_1,b_2},\sigma) = O((b_1+b_2)(n_1+b_2n_2)).  
\end{equation} 
We will do this by reducing the routing problem to that on a 2D-grid. To do so, we first remove the $b_1-2$ red vertices in each vertical edge, and then add edges to connect the vertically aligned two vertices in each brick---the curved and straight dotted edges in Fig. \ref{fig:brickwall_grid}. In this way, the graph ${\tt Brickwall}_{n_1,n_2}^{b_1,b_2}$ is transformed to ${\tt Grid}_{n_1+1,n_2b_2-n_2+1}$, a grid of size $(n_1+1)\times (n_2b_2-n_2+1)$. And note that all black vertices are in this grid, thus $\sigma$ is also a permutation on vertices of ${\tt Grid}_{n_1+1,n_2b_2-n_2+1}$. Now to implement $\sigma$ in ${\tt Brickwall}_{n_1,n_2}^{b_1,b_2}$, we simulate the routing protocol $\mathcal P$ in $\rt({\tt Grid}_{n_1+1,n_2b_2-n_2+1},\sigma)$. In each round of $\mathcal P$, there are swaps of pebbles on two vertices $(i,j)$ for some disjoint edges $(i,j)$ in the grid graph. If $(i,j)$ is an horizontal edge, it can be directly carried out in the brick wall graph as well, because $(i,j)$ is also an edge in the brick wall graph. Therefore it suffices to efficiently swap pebbles on $(i,j)$ for dotted edges $(i,j)$. 

For better parallelization, we partition the bricks into four disjoint groups, indicated by the four colors in Fig. \ref{fig:brickwall_grid}: Use two colors for the odd layers and the other two colors for the even layers; inside each layer, use the two colors alternately. Note that the bricks in the same color do not overlap on vertices or edges, thus the routing can be made parallel for different bricks of the same color. 
Swapping pebbles on $(i,j)$ for any number of dotted edges inside each  brick in ${\tt Brickwall}_{n_1,n_2}^{b_1,b_2}$ is a routing problem inside the brick, which is $C_{2b_2+2b_1-4}$, a cycle of length $2b_2+2b_1-4$. The routing number for this cycle is $\rt(C_{2b_2+2b_1-4})=O(b_1+b_2)$ by path result in Lemma \ref{lem:path_rout}, and note that routing for all the bricks of the same color can be done in parallel. Thus overall one round of the pebble swapping on ${\tt Grid}_{n_1+1,n_2b_2-n_2+1}$ can be implemented in $O((b_1+b_2)(n_1+b_2n_2))$ rounds on ${\tt Brickwall}_{n_1,n_2}^{b_1,b_2}$, proving Eq. \eqref{eq:brackwall-black}.
    
Now let us consider a general permutation on vertices of ${\tt Brickwall}_{n_1,n_2}^{b_1,b_2}
$. First, according to Lemma \ref{lem:permutation_decom}, any permutation can be decomposed as two transpositions, and we can implement them one by one. Now consider a transposition $\pi=(a_1,a_2)(a_3,a_4)\cdots (a_{2k-1},a_{2k})$ on ${\tt Brickwall}_{n_1,n_2}^{b_1,b_2}
$, where $a_1,a_2,\ldots, a_{2k}$ are all distinct. This transposition has three parts, according to the color of the two vertices in each pair. More precisely, we decompose $\pi=\pi_1\circ\pi_2\circ\pi_3$, where $\pi_1$, $\pi_2$ and $\pi_3$ consist of all the transpositions $(a_{2i-1},a_{2i})$ in $\pi$ where $a_{2i-1}$ and $a_{2i}$ are both black, both red, and one black and one red, respectively. We discuss the implementation of them one by one.
\begin{itemize}
    \item Permutation $\pi_1$: Since $\pi_1$ are on black vertices only, we can use the protocol as discussed above to achieve $\rt({\tt Brickwall}_{n_1,n_2}^{b_1,b_2},\pi_1) = O((b_1+b_2)(n_1+b_2n_2))$.

    \item Permutation $\pi_2$:  Consider all bricks of the green color, and perform the following in parallel. For each brick $b$, collect all vertices $a_{b1},\ldots, a_{bj}$ that are both in the left side of this brick and appear in $\pi_2$. Swap (pebbles on) all these $j$ vertices with $j$ distinct black vertices $c_{b1}, \ldots, c_{bj}$ in the upper side in the same brick---this can be conducted because of the assumption that $b_2 \ge b_1$. We also do this for the other three colors one by one, then all red vertices are moved to black ones (except for the ones on the right boundary of the green and blue bricks, which can be easily handled later). This moving takes $O(b_1+b_2)$ rounds. The problem reduces to the first case $\pi_1$, which can be solved by $O((b_1+b_2)(n_1+b_2n_2))$ rounds. Note that in this process, the pebbles originally on $c_{b1}, \ldots, c_{bj}$ remain still in $a_{b1},\ldots, a_{bj}$ (because these pebbles are now on red vertices but $\pi_1$ only swaps black vertices). We then swap pebbles on $a_{b1},\ldots, a_{bj}$ and $c_{b1}, \ldots, c_{bj}$ back. The overall cost is $O((b_1+b_2)(n_1+b_2n_2))$ rounds. (The red ones on the right boundary of the green and blue bricks can then be handled similarly with at most the same cost.)

    \item Permutation $\pi_3$: For transpositions $(a_{2i-1},a_{2i})$, assume without loss of generality that $a_{2i-1}$ is black and $a_{2i}$ is red. Let $p_{2i-1}$ and $p_{2i}$ denote the pebbles at vertices $a_{2i-1}$ and $a_{2i}$ respectively. 
        First, we permute each pebble  $p_{2i-1}$ from $a_{2i-1}$ to a black vertex $b_{2i-1}$ within the same brick of red vertices $a_{2i}$. (If $a_{2i-1}$ and $a_{2i}$ are already in the same brick then we do not need to move it). This is possible because $b_2 \ge b_1$.   This permutation among black vertices only can be implemented in $O((b_1+b_2)(n_1+b_2n_2))$ rounds.
        Second, we exchange pebbles $p_{2i}$ at red vertices $a_{2i}$ and pebbles $p_{2i-1}$ at black vertices in the same brick. This can be implemented in $O(b_1+b_2)$ rounds. 
        Third, we apply the inverse permutation of the first step in $O((b_1+b_2)(n_1+b_2n_2))$ rounds.
        The total rounds of $\pi_3$ is $2\cdot O((b_1+b_2)(n_1+b_2n_2))+O(b_1+b_2)=O((b_1+b_2)(n_1+b_2n_2))$. 
    \end{itemize}
    Putting all steps together, the permutation $\pi=\pi_1\circ \pi_2 \circ \pi_3$ on brick wall ${\tt Brickwall}^{b_1,b_2}_{n_1,n_2}$ can be implemented in $O((b_1+b_2)(n_1+b_2n_2))$ rounds.
\end{proof}

\section{Discussion}
\label{sec:discussion}
We have fully characterized the depth overhead when compiling a quantum circuit with all-to-all qubit connections to one under a connectivity graph constraint for any graph, by a well-studied graph measure called routing number. We also developed compiling algorithms to achieve the asymptotically optimal depth overhead. This enables us to utilize existing routing algorithms for various specific graphs such as paths, grids and trees, to construct hardware-compliant compilers. Constraint graphs like IBM's brick walls and Rigetti's cycle-grids can also be handled via some simple reductions. The characterization of the depth overhead can also be utilized in the design of quantum processors when a small depth overhead is crucially needed.

\bibliographystyle{quantum}
\bibliography{depthoverhead}

\newpage
\appendix
\section{Hardness of computing depth overhead}
\label{app:hardness}
In this section, we show the hardness of computing the depth overhead of a given circuit $C$ and a constraint graph $G$. 

First, we present a few initial concepts that will be utilized in the subsequent proof.
\paragraph{Quantum gates and circuits.} A SWAP gate implements the unitary which swaps the basis $|01\rangle$ and $|10\rangle$ and keeps $|00\rangle$ and $|11\rangle$ unchanged. A SWAP gate can be implemented by three CNOT gates. 

Throughout this paper, we consider quantum circuits on $n$ qubits. Qubit connectivity constraint is specified by an undirected graph $G = (V,E)$, where $V$ is the set of $n$ vertices, often identified with the $n$ qubits, and $E$ is the set of edges, specifying pairs of qubits that two-qubit gates can act on. A circuit without any qubit connectivity constraint is called an \textit{unconstrained circuit}. A circuit made of SWAP gates only is called a \textit{SWAP circuit}. A SWAP circuit $C$ implements a permutation on qubits in the sense that $C\ket{x_1 x_2 \ldots x_n} = \ket{x_{\pi(1)} x_{\pi(2)} \ldots x_{\pi(n)}}$ for some qubit permutation $\pi\in S_n$. 

For any qubit connectivity graph $G$ and any permutation $\pi$ of vertices, there is a natural one-one correspondence between the following two sets:
\begin{multline}\label{eq:circuit-routing-correspondence}
    \{\text{$G$-compliant SWAP circuits $C$ of depth $d$ realizing $\pi$}\} \\ 
    \leftrightarrow \{\text{routing algorithm $A$ of $d$ rounds on $G$ realizing $\pi$}\}. 
\end{multline}
Indeed, each layer of a SWAP circuit $C$ consists of SWAP gates on distinct qubits, which 1-1 correspond to one round of a routing algorithm consisting of swapping pebbles. Note that $C$ and $A$ implement the same permutation, and are compliant with the same graph $G$. 

Two quantum circuits $C$ and $C'$ are equivalent if they realize the same unitary operation. In this paper, we are interested in a special \textit{SWAP equivalence}, denoted by $C' \sim C$, in which $C'$ is obtained from $C$ by inserting SWAP gates and applying $C$'s gates on permuted qubits. More specifically, suppose that $C$ = $\prod_{i=1}^d L_i$ where $L_i$ is the $i$-th layer, made of one- and two-qubit gates on distinct qubits. A circuit $C'$ satisfies $C'\sim C$ if $C' = P(\pi)^\dagger \prod_{i=1}^d L_i'$, where $L_i'$ is a circuit similar to $L_i$, but possibly with SWAP circuits inserted between the gates in $L_i$: If a gate $U$ in $L_i$ is on qubits $p$ and $q$, then in $L_i'$ it should be $U$ applied to qubits $\sigma(p)$ and $\sigma(q)$, where $\sigma$ is the permutation on qubits realized by the inserted SWAP gates in circuit $C'$ from beginning to right before $U$;
$\pi$ is the permutation of qubits by \textit{all} SWAP gates in circuit $\prod_{i=1}^d L_i'$, and $P_{\pi}^\dagger: \ket{x_{\pi(1)} x_{\pi(2)} \ldots x_{\pi(n)}} \mapsto \ket{x_1 x_2 \ldots x_n}$ is just to permute the qubits back to their original positions.

\paragraph{Languages and decision problems.} 
A \textit{language} $L$ is {a set of strings of finite lengths over a finite alphabet $\Sigma$.} The corresponding \textit{decision problem $D_L$} is to decide whether an input instance $x$ belongs to the language $L$. A language $L$ (or its decision problem $D_L$) is in NP {if there is a polynomial-time Turing machine $M$ such that the following two statements are equivalent for any input $x$: (1) $x\in L$, and (2) there is a certificate string $y$ of length polynomial in that of $x$ such that $M$ accepts $(x,y)$ \cite{arora2009computational}.}  A language $L$ or its decision problem $D_L$ is \textit{NP-complete} if $L$ is in NP and any other NP problem can be reduced to $L$ in polynomial time. If an NP-complete language $L$ reduces to an NP language $L'$ {in polynomial time}, then $L'$ is also NP-complete.

It is known that computing the routing number $\rt(G,\pi)$ of a given connected graph $G$ and a permutation $\pi$ is hard, as the following lemma states. 
\begin{lemma}[Routing deciding problem, \cite{banerjee2017new}]\label{lem:routing_npc}
  Deciding whether $\rt(G,\pi)\le k$ on a given graph $G$, vertex permutation $\pi$, and any integer $k\ge 3$ is NP-complete.  
\end{lemma}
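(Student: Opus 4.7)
The plan is to establish the two standard ingredients of NP-completeness: membership in NP and NP-hardness by reduction from a known NP-complete problem. For membership, the natural witness for an instance $(G,\pi,k)$ is a sequence of $k$ matchings $M_1,\ldots,M_k\subseteq E$. A verifier checks in polynomial time that each $M_i$ is a valid matching (its edges are pairwise vertex-disjoint) and that the composition of the pebble swaps induced by $M_1,\ldots,M_k$ sends every vertex $v$ to $\pi(v)$. Both checks run in time polynomial in $n$ and $k$, so the problem lies in NP.

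For NP-hardness I would focus first on the smallest interesting case $k=3$, since $k\le 2$ can be analyzed directly via matching arguments and is unlikely to be hard. A natural source problem is $3$-SAT, $3$-colorability, or exact cover by $3$-sets. Given an instance $\Phi$ of the chosen problem, I would construct a gadget-based graph $G_\Phi$ together with a target permutation $\pi_\Phi$ such that $\rt(G_\Phi,\pi_\Phi)\le 3$ iff $\Phi$ has a solution. Variable (or vertex) gadgets would encode discrete choices through the round in which a distinguished swap is executed, and the per-round matching constraint would enforce the conflict structure of the source problem; clause (or edge) gadgets would wire together variable gadgets so that a consistent $3$-round schedule exists exactly when a solution to $\Phi$ does. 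To extend the result from $k=3$ to arbitrary $k\ge 3$, a padding reduction should suffice: attach to $G_\Phi$ a vertex-disjoint auxiliary subgraph whose own routing inherently requires exactly $k-3$ extra rounds regardless of $\Phi$, so that the combined instance admits a $\le k$-round routing iff the base construction admits a $\le 3$-round one.

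The main obstacle, and the delicate heart of any such reduction, is designing the gadget for the $k=3$ case. Three rounds of parallel matching swaps offer very limited expressive power, so the gadgets must be compact enough to schedule all required information transfers within three rounds, yet rich enough to faithfully encode the source problem. The harder direction of the equivalence is typically the converse one: ruling out unintended routing schedules that happen to reach $\pi$ in $3$ rounds without corresponding to a valid solution of $\Phi$. This usually demands a careful case analysis of which matchings can appear in each round, supplemented by ``trap'' vertices whose only feasible routes force the intended encoding. Once such gadgets are in hand the formal reduction is routine, but the combinatorial engineering to guarantee ``no shortcut'' schedules is where the substantive work lies.
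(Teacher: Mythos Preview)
The paper does not prove this lemma at all: it is quoted verbatim as a known result from the cited reference and then used as a black box in the proof of Theorem~\ref{thm:depthoverhead_npc}. So there is no in-paper argument to compare your proposal against.

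That said, your outline is the right template and matches what one would expect the cited proof to look like: a polynomial-size certificate consisting of the sequence of matchings (noting that $\rt(G)\le 3n$ keeps the certificate polynomial even if $k$ is large), and a gadget reduction for $k=3$ followed by padding to larger $k$. But your write-up is explicitly a \emph{plan}, not a proof: you identify the gadget construction for $k=3$ as ``the substantive work'' and then do not do it. The entire difficulty of the result lives precisely in that construction and in the ``no shortcut schedule'' argument you flag; without it, what you have written establishes only NP membership. If you want a self-contained proof you will need to either carry out the gadget design or, as the paper does, cite the external source.
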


Second, using this result, we can show the hardness of computing the depth overhead.

\begin{theorem}[Depth overhead deciding problem]\label{thm:depthoverhead_npc}
    Deciding whether $\doh(G,C)\le \alpha$ on a given $n$-node graph $G$, a {quantum circuit $C$}, and a rational number $\alpha$ (with description length polynomial in $n$) is NP-complete. 
\end{theorem}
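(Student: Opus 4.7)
The plan is to prove NP-completeness by separately establishing membership in NP and NP-hardness via reduction from the routing decision problem of Lemma~\ref{lem:routing_npc}.

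For NP membership, the certificate is the compiled circuit $C'$ itself, described as an ordered list of at most $O(n \cdot d(C'))$ elementary gates specifying where SWAP gates are inserted into $C$ together with any SWAP sequence realizing the final permutation correction $P_{\pi'}^\dagger$. Since $d(C') \le \alpha \cdot d(C)$ and $\alpha$ has polynomial bit length, the certificate has polynomial size. A deterministic verifier checks in polynomial time that (i) every gate of $C'$ acts on an edge of $G$, (ii) $d(C') \le \alpha \cdot d(C)$, and (iii) $C'$ has the SWAP-equivalence template structure $C' = P_{\pi'}^\dagger \prod_i L_i'$ relative to $C = \prod_i L_i$.

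For NP-hardness, I would reduce from deciding $\rt(G,\pi) \le k$ with $k\ge 3$. Given such an instance, apply Lemma~\ref{lem:permutation_decom} to decompose $\pi = \sigma_1 \circ \sigma_2$ into two transpositions; each transposition is a disjoint union of $2$-cycles and is thus realizable by a single layer of SWAP gates on the unconstrained graph $K_n$. Define $C$ to be the depth-$2$ SWAP circuit on $n$ qubits that applies $\sigma_2$ then $\sigma_1$, realizing the unitary $U_\pi$; the trivial edge cases $\pi = \mathrm{id}$ or tiny $n$ are handled directly. Set $\alpha = k/2$, which has polynomial bit length. The crux is proving
\begin{equation*}
    \doh(G, C) \;=\; \frac{\rt(G, \pi)}{2},
\end{equation*}
so that $\doh(G,C) \le \alpha$ iff $\rt(G,\pi) \le k$. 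Because every gate of $C$ is a SWAP, every $C' \sim C$ is itself a $G$-compliant SWAP circuit (its gates are original SWAPs of $C$ applied on relabelled qubits, inserted SWAPs, and the SWAP sequence implementing $P_{\pi'}^\dagger$) realizing $U_\pi$. By the correspondence~\eqref{eq:circuit-routing-correspondence}, such a circuit has depth at least $\rt(G, \pi)$; the matching upper bound is obtained by starting from an optimal $G$-compliant routing of $\pi$ and rewriting it in the template form $P_{\pi'}^\dagger L_2' L_1'$.

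The main obstacle is this template rewriting for the upper bound: the definition of $C' \sim C$ forces the two original SWAP layers of $C$ to appear inside $L_1'$ and $L_2'$ in order (on appropriately permuted qubits), which is more rigid than a free routing. I would handle it by embedding the two original SWAP layers into two chosen rounds of an optimal routing, using local inserted SWAPs to bring the relevant qubit pairs adjacent at those rounds, and placing all remaining routing SWAPs either in the inserted-SWAP slots or in the $P_{\pi'}^\dagger$ tail; a routine counting argument shows the total depth increases by at most an additive constant. If this additive constant cannot be absorbed exactly, I would instead pad $C$ with $\Theta(k)$ trivial SWAP-pair layers each realizing the identity, making the overhead multiplicatively negligible and rescaling $\alpha$ accordingly. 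Either way, the reduction remains polynomial time, and Lemma~\ref{lem:routing_npc} ensures NP-hardness for every scaled threshold.
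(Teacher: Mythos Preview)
Your overall strategy matches the paper's: certify NP membership via the compiled circuit, and prove NP-hardness by reducing from Lemma~\ref{lem:routing_npc}, encoding $\pi$ as a shallow SWAP circuit $C_\pi$ with threshold $\alpha = k/d(C_\pi)$. There are, however, two gaps relative to the paper's argument.

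First, the paper case-splits on whether $\pi^2=\mathrm{id}$: if $\pi$ is already an involution the natural decomposition yields $d(C_\pi)=1$ and one must take $\alpha=k$; only otherwise does one get $d(C_\pi)=2$ and $\alpha=k/2$. Your blanket choice $\alpha=k/2$ is wrong in the involution case, and involutions are not a ``trivial edge case'' of Lemma~\ref{lem:routing_npc}.

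Second, the paper does not attempt your template-rewriting workaround at all. It argues directly that because $C_\pi$ consists entirely of SWAP gates, the set $\{C':C'\sim C_\pi,\ G\text{-compliant}\}$ coincides with the set of all $G$-compliant SWAP circuits realizing $\pi$; hence an optimal routing of depth $\rt(G,\pi)$ already lies in this set and $\min_{C'\sim C_\pi} d(C') = \rt(G,\pi)$ exactly, with no additive slack to absorb. Your two fallbacks are both weaker than this: the ``additive constant'' claim for embedding the original layers into an optimal routing is unsubstantiated (bringing a single original SWAP pair onto an edge of $G$ can cost up to $\Theta(\rt(G))$ inserted SWAPs, not $O(1)$), and the padding construction adds $\Theta(k)$ new original SWAP gates that must themselves appear in the template on relocated positions, so it does not obviously preserve the exact equivalence $\rt(G,\pi)\le k \Leftrightarrow \doh(G,C)\le\alpha$ that the reduction requires.
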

{
\begin{proof}
    We first show that the hardness of deciding whether $\doh(G,C)\le \alpha$ is no more than that of NP, or more precisely, the language $\{(G,C,\alpha): \doh(G,C)\le \alpha\}$ belongs to NP. Given a Yes input instance, i.e. a tuple $(G,C,\alpha)$ where $\doh(G,C)\le \alpha$, we can use as a certificate a minimum-depth $G$-compliant circuit $C'$ which is equivalent to $C$ and obtained from $C$ by inserting SWAP gates only. By definition of $\doh(G,C)$, the depth of $C'$ is $d(C') = \doh(G,C)\cdot d(C)$, where $d(C)$ is the depth of $C$. The verification algorithm needs to check that (1) $d(C')/d(C) \le \alpha$, and (2) $C'\sim C$. The first is trivially done in polynomial time, and the second is also easy as one can keep track of all permutations and check whether for each gate $U$ in $C$ and the corresponding gate $U$ in $C'$, whether the latter applies on the $\pi$-permuted qubits with $\pi$ being the qubit permutation from the beginning of $C'$ up to gate $U$.

    Next, for any given graph $G=(V,E)$, permutation $\pi$ and integer $k\ge 3$ of the routing deciding problem $\rt(G,\pi)\le k$, we will efficiently construct a circuit $C_\pi$ under no graph constraint, and define a rational number $\alpha$, such that $\rt(G,\pi)\le k$ if and only if $\doh(G,C_\pi)\le \alpha$. 

    Circuit $C_\pi$ and rational number $\alpha$ are determined as follows.
    \begin{itemize}
        \item For any permutation $\pi$ of routing deciding problem $\rt(G,\pi)\le k$, compute $\pi^2$.

        \item If $\pi^2=id$, then $\pi$ is a transposition and can be written as \[\pi=(a_1,a_2)(a_3,a_4)\cdots (a_{2m-1},a_{2m})\] for some $m\le n/2$ and distinct $a_1,a_2,\cdots,a_{2m}\in V$. 
        
        \begin{itemize}
            \item Define an unconstricted circuit $C_\pi = \prod_{i=1}^m \text{SWAP}(a_{2i-1},a_{2i})$, which consists of $m$ SWAP gates. Since all $a_i$'s are distinct, the depth of $C_\pi$ is clearly $d(C_{\pi})=1$.

            \item Let $\alpha = k$.
        \end{itemize}

        \item If $\pi^2\neq id$, then $\pi$ is not a transposition. According to Lemma \ref{lem:permutation_decom}, $\pi$ can be decomposed as two transpositions $\pi = \sigma_1\circ \sigma_2$. As discussed above, we can construct two SWAP circuits of depth 1 for $\sigma_1$ and $\sigma_2$. 
        \begin{itemize}
            \item Combining these SWAP circuits gives an unconstrained SWAP circuit $C_\pi$ with depth $d(C_\pi)=2$.

            \item Let $\alpha = k/2$.
        \end{itemize}
    \end{itemize}
    Observe that in either case, the circuit $C_\pi$ realizes permutation $\pi$ in the natural sense that $C_\pi\ket{x_1 x_2 \ldots x_n} = \ket{x_{\pi(1)} x_{\pi(2)} \ldots x_{\pi(n)}}$, and that $\alpha \cdot d(C_\pi) = k$. Furthermore, the above algorithm mapping $(G,\pi,k)$ to  $(G, C_\pi,\alpha)$ can clearly be implemented in polynomial time. Now we will show that 
    \[\rt(G,\pi)\le k \text{ if and only if }\doh(G,C_\pi)\le \alpha.\] 
    First, we claim that the following equality holds for this particular $C_\pi$.
    \begin{align}\label{eq:npc-reduction}
        \min \big\{d(C_\pi'): C_\pi' \sim C_\pi, \text{ and $ C_\pi'$ is $G$-compliant}\big\} = \rt(G,\pi).
    \end{align}
    \begin{itemize}
        \item ``$\ge$'': Take any $C_\pi'$ that achieves the minimum of the left-hand side. Since $C_\pi$ realizes $\pi$ and $C_\pi' \sim C_\pi$, we have that $C_\pi'$ realizes $\pi$ as well. Also note that $C_\pi$ is a SWAP circuit, and $C_\pi'$ is obtained from $C_\pi$ by inserting SWAP gates, therefore $C_\pi'$ is also a SWAP circuit. By the correspondence in Eq. \eqref{eq:circuit-routing-correspondence}, $C_\pi'$ induces a $d(C_\pi')$-round routing algorithm realizing $\pi$, thus $\rt(G,\pi)$ as the minimum number of rounds of such protocols is at most $d(C_\pi')$. 
        \item ``$\le$'': Take any routing algorithm $A$ with $\rt(G,\pi)$ rounds. Again by the correspondence in Eq. \eqref{eq:circuit-routing-correspondence}, $A$ induces a $\rt(G,\pi)$-depth SWAP circuit $C(A)$ realizing $\pi$. Any $C_\pi'\sim C_\pi$ only inserts SWAP gates to $C_\pi$, thus is also a SWAP circuit. Therefore, the set in the left-hand side contains all $G$-compliant SWAP circuits realizing $\pi$, including $C(A)$. Thus the minimum depth $d(C_\pi')$ is at most the depth of $C(A)$, which is $\rt(G,\pi)$.
    \end{itemize}
    Now with Eq. \eqref{eq:npc-reduction}, we can see that 
    \begin{align*}
        & \ \doh(G,C) \le \alpha \\
        \Leftrightarrow & \ \min \big\{d(C_\pi'): C_\pi' \sim C_\pi, \text{ and $ C_\pi'$ is $G$-compliant}\big\}/d(C_\pi) \le \alpha & (\text{by definition of } \doh(G,C)) \\
        \Leftrightarrow & \ \frac{\rt(G,\pi)}{d(C_\pi)} \le \alpha & (\text{by Eq.~\eqref{eq:npc-reduction}})\\
        \Leftrightarrow & \ \rt(G,\pi) \le k & (\text{by definition of } \alpha)
    \end{align*}
    This completes the proof.
\end{proof}

\end{document}